\definecolor{winered}{rgb}{0.5,0,0}
\definecolor{darkblue}{rgb}{0,0,0.5}
\definecolor{darkgreen}{rgb}{0,0.3,0}
\newcommand{\A}{\mathsf{A}}
\newcommand{\B}{\mathsf{B}}
\newcommand{\C}{\mathsf{C}}
\newcommand{\sfS}{\mathsf{S}}
\newcommand{\Ext}{\mathsf{Ext}}
\newcommand{\Adv}{\mathsf{Adv}}
\newcommand{\N}{\mathbb{N}}
\newcommand{\ttfalse}{\mathtt{false}}
\newcommand{\tttrue}{\mathtt{true}}
\newcommand{\info}{\mathtt{info}}
\newcommand{\sfBasic}{\mathsf{Base}}
\newcommand{\Ours}{\mathsf{Ours}}
\newcommand{\sfGame}{\mathsf{Game}}
\newcommand{\vk}{\mathsf{vk}}
\newcommand{\sk}{\mathsf{sk}}
\newcommand{\pp}{\mathsf{pp}}
\newcommand{\DS}{\mathsf{DS}}
\newcommand{\DSSetup}{\mathsf{DS.Setup}}
\newcommand{\DSKeyGen}{\mathsf{DS.KeyGen}}
\newcommand{\DSSign}{\mathsf{DS.Sign}}
\newcommand{\DSVerify}{\mathsf{DS.Verify}}
\newcommand{\ppDS}{\mathsf{pp}^{\mathsf{DS}}}
\newcommand{\skDS}{\mathsf{sk}^{\mathsf{DS}}}
\newcommand{\vkDS}{\mathsf{vk}^{\mathsf{DS}}}
\newcommand{\sigmaDS}{\sigma^{\mathsf{DS}}}
\newcommand{\sigmastrDS}{\sigma^{*\mathsf{DS}}}
\newcommand{\nDSSetup}{\mathsf{Setup}}
\newcommand{\nDSKeyGen}{\mathsf{KeyGen}}
\newcommand{\nDSSign}{\mathsf{Sign}}
\newcommand{\nDSVerify}{\mathsf{Verify}}
\newcommand{\sEUFCMA}{\mathsf{sEUFCMA}}
\newcommand{\Sign}{\mathsf{Sign}}
\newcommand{\COM}{\mathsf{COM}}
\newcommand{\COMKeyGen}{\mathsf{COM.KeyGen}}
\newcommand{\COMCommit}{\mathsf{COM.Commit}}
\newcommand{\nCOMKeyGen}{\mathsf{KeyGen}}
\newcommand{\nCOMCommit}{\mathsf{Commit}}
\newcommand{\ck}{\mathsf{ck}}
\newcommand{\sBind}{\mathsf{sBind}}
\newcommand{\Hide}{\mathsf{Hide}}
\newcommand{\MerkleTree}{\mathsf{MerkleTree}}
\newcommand{\MerklePath}{\mathsf{MerklePath}}
\newcommand{\RootReconstruct}{\mathsf{RootReconstruct}}
\newcommand{\sftree}{\mathsf{tree}}
\newcommand{\sfroot}{\mathsf{root}}
\newcommand{\sfpath}{\mathsf{path}}
\newcommand{\ItoB}{\mathsf{I2B}}
\newcommand{\Coll}{\mathsf{Coll}}
\newcommand{\MT}{\mathsf{MT}}
\newcommand{\onOS}{(1, n)\mathsf{\mathchar`-OS}}
\newcommand{\OS}{\mathsf{OS}}
\newcommand{\OSSetup}{\mathsf{OS.Setup}}
\newcommand{\OSKeyGen}{\mathsf{OS.KeyGen}}
\newcommand{\OSUi}{\mathsf{OS.U}_{1}}
\newcommand{\OSSii}{\mathsf{OS.S}_{2}}
\newcommand{\OSUDer}{\mathsf{OS.U}_{\mathsf{Der}}}
\newcommand{\OSVerify}{\mathsf{OS.Verify}}
\newcommand{\ppOS}{\mathsf{pp}^{\mathsf{OS}}}
\newcommand{\skOS}{\mathsf{sk}^{\mathsf{OS}}}
\newcommand{\vkOS}{\mathsf{vk}^{\mathsf{OS}}}
\newcommand{\sigmaOS}{\sigma^{\mathsf{OS}}}
\newcommand{\sigmastrOS}{\sigma^{*\mathsf{OS}}}
\newcommand{\nOSSetup}{\mathsf{Setup}}
\newcommand{\nOSKeyGen}{\mathsf{KeyGen}}
\newcommand{\nOSUi}{\mathsf{U}_{1}}
\newcommand{\nOSSii}{\mathsf{S}_{2}}
\newcommand{\nOSUDer}{\mathsf{U}_{\mathsf{Der}}}
\newcommand{\nOSVerify}{\mathsf{Verify}}
\newcommand{\st}{\mathsf{st}}
\newcommand{\Amb}{\mathsf{Amb}}
\newcommand{\SeqsEUFCMA}{\mathsf{Seq\mathchar`- sEUFCMA}}
\newcommand{\Fin}{\mathsf{Fin}}
\newcommand{\ListM}{\mathsf{ListM}}
\newcommand{\ZLH}{\mathsf{ZLH}}
\newcommand{\Final}{\mathsf{Final}}
\newcommand{\Hash}{\mathsf{Hash}}
\newcommand{\Link}{\mathsf{Link}}
\newcommand{\rmsEUFCMA}{\mathrm{sEUF}\mathchar`-\mathrm{CMA}}
\newcommand{\rmSeqsEUFCMA}{\mathrm{Seq}\mathchar`-\mathrm{sEUF}\mathchar`-\mathrm{CMA}}
\newcommand{\ReuseDS}{\mathtt{DS}_{\mathtt{reuse}}}
\newcommand{\ForgeDS}{\mathtt{DS}_{\mathtt{forge}}}
\newcommand{\CollCOM}{\mathtt{COM}_{\mathtt{coll}}}
\spnewtheorem{assumption}{Assumption}{\bfseries}{\itshape}
\begin{document}

\title{$1$-out-of-$n$ Oblivious Signatures:\\ Security Revisited and a Generic Construction \\ with an Efficient Communication Cost\thanks{A preliminary version \cite{TT23} of this paper is appeared in Information Security and Cryptology  {ICISC} 2023  - 26th International Conference.}}
\author{Masayuki Tezuka\inst{1}\textsuperscript{(\Letter)} \and Keisuke Tanaka\inst{2}}
\authorrunning{M.Tezuka et al.}

\institute{Tokyo Institute of Technology, Tokyo, Japan\\
\email{tezuka.m.ac@m.titech.ac.jp} 
}

\maketitle
\pagestyle{plain}
\noindent
\makebox[\linewidth]{March 31, 2024}              

\begin{abstract}
$1$-out-of-$n$ oblivious signature by Chen (ESORIC 1994) is a protocol between the user and the signer.
In this scheme, the user makes a list of $n$ messages and chooses the message that the user wants to obtain a signature  from the list.
The user interacts with the signer by providing this message list and obtains the signature for only the chosen message without letting the signer identify which messages the user chooses.
Tso et al. (ISPEC 2008) presented a formal treatment of $1$-out-of-$n$ oblivious signatures.
They defined unforgeability and ambiguity for $1$-out-of-$n$ oblivious signatures as a security requirement.

In this work, first, we revisit the unforgeability security definition by Tso et al. and point out that their security definition has problems.
We address these problems by modifying their security model and redefining unforgeable security.
Second, we improve the generic construction of a $1$-out-of-$n$ oblivious signature scheme by Zhou et al. (IEICE Trans 2022).
We reduce the communication cost by modifying their scheme with a Merkle tree.
Then we prove the security of our modified scheme.

\keywords{$1$-out-of-$n$ oblivious signatures \and Generic construction \and Round-optimal \and Merkle tree \and Efficient communication cost}
\end{abstract}

\section{Introduction}
\subsection{Background}

\paragraph{\bf Oblivious Signatures.}
The notion of $1$-out-of-$n$ oblivious signatures by Chen~\cite{Chen94} is an interactive protocol between a signer and a user.
In an oblivious signature scheme, first, the user makes a list of $n$ messages $M=(m_{i})_{i \in \{1, \dots, n\}}$ and chooses one of message $m_{j}$ in $M$ that the user wants to obtain a signature.
Then the user interacts with the signer by sending the list $M$ with a first message $\mu$ at the beginning of the interaction.
The signer can see the candidate messages $M$ that the user wants to get signed, but cannot identify which one of the messages in $M$ is chosen by the user.
After completing the interaction with the signer, the user can obtain a signature $\sigma$ for only the chosen message $m_{j}$.

$1$-out-of-$n$ oblivious signatures should satisfy ambiguity and unforgeability. 
Ambiguity prevents the signer from identifying which one of the messages the signer wants to obtain the signature in the interaction.
Unforgeability requires that for each interaction, the user cannot obtain a signature of a message $m \notin M$ and can obtain a signature for only one message $m \in M$ where $M$ is a list of message that the user sends to the signer at the beginning of the interaction.

Oblivious signatures can be used to protect the privacy of users.
Chen \cite{Chen94} explained an application of oblivious signatures as follows.
The user will buy software from the seller and the signature from the seller is needed to use the software.
However, information about which software the user is interested in may be sensitive at some stage.
In this situation, by using oblivious signatures, the user can make a list of $n$ software and obtain a signature only for the one software that the user honestly wants to obtain without revealing it to the seller (signer).
The oblivious signature can be used for e-voting systems \cite{CC18,SYL08}.

\paragraph{\bf Oblivious Signatures and Blind Signatures.}
Signatures with a similar flavor to oblivious signatures are blind signatures proposed by Chaum \cite{Chaum82}.
In a blind signature scheme, similar to an oblivious signature scheme, a user chooses a message and obtains a corresponding signature by interacting with the signer, but no message candidate list in the blind signature scheme. 
Typically, blind signatures satisfy blindness and one-more unforgeability (OMUF). 
Blindness guarantees that the signer has no idea what a message is being signed and prevents the signer from linking a message/signature pair to the run of the protocol where it was created.
OMUF security prevents the user from forging a new signature.

From the point of view of hiding the contents of the message, it may seem that blind signatures are superior than oblivious signatures.
But compared to blind signatures, oblivious signature has merits listed as follows.
\begin{itemize}
\item {{\bf Avoid Signing Disapprove Messages:}}
In blind signatures, since the signer has no information about the message that the user wants to obtain the signature, the signer cannot prevent users from obtaining a signature on the message that the signer does not want to approve.

Partially blind signatures proposed by Abe and Fujisaki \cite{AF96} mitigate this problem.
This scheme allows the user and the signer to agree on a predetermined piece of common information $\info$ which must be included in the signed message.
However, similar to blind signatures, the signer has no information for the blinded part of a message, partially blind signatures do not provide a full solution for the above problem.

By contrast, oblivious signatures allow the signer to view a list of messages.
If the message that the signer does not want to approve is included in the message list, the signer can refuse to sign.
Thus, the ambiguity of oblivious signatures provides a better solution for the above problem.

\item {{\bf Based on Weaker Assumptions:}}
Recent works on blind signatures are dedicated to constructing efficient round-optimal (i.e., $2$-move signing interaction) blind signature schemes \cite{AKSY22,BHKKKSW19,PK22,FHKS16,FHS15,Ghadafi17,HK16,HK17,HLW23,KNYY21,LNP22}.
However, these schemes either rely on at least one of strong primitives, models, or assumptions such as pairing groups \cite{BHKKKSW19,FHKS16,FHS15,Ghadafi17,HK17,HLW23}, non-interactive zero-knowledge (NIZK) \cite{AKSY22,PK22,KNYY21,LNP22}, the random oracle model (ROM) \cite{PK22,HLW23}, the generic group model (GGM) \cite{FHKS16}, interactive assumptions \cite{BHKKKSW19,FHS15,Ghadafi17,HK17}, $q$-type assumptions~\cite{HK16}, one-more assumptions \cite{AKSY22}, or knowledge assumptions \cite{HK16}.

By contrast, a generic construction of a round-optimal oblivious signature scheme without the ROM was proposed in the recent work by Zhou, Liu, and Han \cite{ZLH22}.
This construction uses a digital signature scheme and a commitment scheme. This leads to instantiations in various standard assumptions (e.g., DDH, DCR, Factoring, RSA, LWE) without the ROM.
Thus, the round-optimal oblivious signature schemes can be constructed with weaker assumptions than round-optimal blind signature schemes.
\end{itemize}

\paragraph{\bf Previous Works on Oblivious Signatures.}
The notion of oblivious signatures was introduced by Chen \cite{Chen94} and proposed $1$-out-of-$n$ oblivious signature schemes in the ROM.
Following this seminal work, several $1$-out-of-$n$ oblivious signature schemes have been proposed.

Tso, Okamoto, and Okamoto \cite{TOO08} formalized the syntax and security definition of the $1$-out-of-$n$ oblivious signature scheme.
They gave the efficient round-optimal (i.e., $2$-move) $1$-out-of-$n$ oblivious signature scheme based on the Schnorr signature scheme.
The security of this scheme can be proven under the DL assumption in the ROM.

Chiou and Chen \cite{CC18} proposed a $t$-out-of-$n$ oblivious signature scheme.
This scheme needs $3$ rounds for a signing interaction and the security of this scheme can be proven under the RSA assumption in the ROM.

You, Liu, Tso, Tseng, and Mambo \cite{YLTTM22} proposed the lattice-based $1$-out-of-$n$ oblivious signature scheme.
This scheme is round-optimal and the security can be proven under the short integer solution (SIS) problem in the ROM.

In recent work by Zhou, Liu, and Han \cite{ZLH22}, a generic construction of a round-optimal $1$-out-of-$n$ oblivious signature scheme was proposed.
Their scheme is constructed from a commitment scheme and a digital signature scheme without the ROM.
By instantiating a signature scheme and commitment scheme from standard assumptions without the ROM, this generic construction leads $1$-out-of-$n$ oblivious signature schemes from standard assumptions without the ROM.
As far as we know, their scheme is the first generic construction of a $1$-out-of-$n$ oblivious signature scheme without the ROM.

\subsection{Motivation}
The security model for a $1$-out-of-$n$ oblivious signature scheme is formalized by Tso \cite{TOO08}.
Their security model is fundamental for subsequent works \cite{ZLH22,YLTTM22}.
However, this security model has several problems.
Here, we briefly review the unforgeability security model in \cite{TOO08} and explain the problems of their model.
The formal description of this security game is given in Section \ref{Subsec_Def_Unf_Prev}.

\paragraph{\bf Definition of Unforgeability in \cite{TOO08}.}
Informally, the unforgeability for a $1$-out-of-$n$ oblivious signature scheme in \cite{TOO08} is defined by the following game.

Let $\A$ be an adversary that executes a user part and tries to forge a new signature.
$\A$ engages in the signing interaction with the signer.
$\A$ can make any message list $M_{i}$ and any one message $m_{i,j_{i}} \in M_{i}$.
Then, $\A$ engages the $i$-th signing interaction by sending $M_{i}$ the signer at the beginning of the interaction.
By interacting with the signer, $\A$ can obtain a signature $\sigma_{i}$ on a message $m_{i,j_{i}}$.
Let $t$ be the number of signing interaction with the signer and $\A$. Let $\mathbb{L}^{\Sign} = \{m_{i,j_{i}}\}_{i \in \{1, \dots, t\}}$ be all messages that $\A$ has obtained signatures.
$\A$ wins this game if $\A$ outputs a valid signature $\sigma^*$ on a message $m^* \notin \mathbb{L}^{\Sign}$.
A $1$-out-of-$n$ oblivious signature scheme satisfies unforgeability if for all PPT adversaries $\A$ cannot win the above game in non-negligible probability. 
However, the above security game has several problems listed below.

\begin{itemize}
\item{\bf Problem 1: How to Store Messages in $\mathbb{L}^{\Sign}$:}
In the above security game, we need to store corresponding messages that $\A$ has obtained signatures.
However, by ambiguity property, we cannot identify the message $m_{i, j_{i}}$ that $\A$ selected to obtain a signature from a transcription of the $i$-th interaction with $M_{i}$.
This problem can be addressed by forcing $\A$ to output $(m_{i, j_{i}}, \sigma_{i})$ at the end of each signing query.
However, the next problem arises.

\item{\bf Problem 2: Trivial Attack:}
One problem is the existence of a trivial attack on the security game.
Let us consider the following adversary $\A$ that runs signing protocol execution twice.
$\A$ chooses $M=(m_{0}, m_{1})$ where $m_{0}$ and $m_{1}$ are distinct, and sets lists as $M_{1}=M_{2} = M$.
In the 1st interaction, $\A$ chooses $m_{0} \in M_{1}$, obtains a signature $\sigma_{0}$ on a message $m_{0}$, and outputs $(m_{0}, \sigma_{0})$ at the end of interaction.
In the 2nd interaction, $\A$ chooses $m_{1} \in M_{2}$, obtains a signature $\sigma_{1}$ on a message $m_{1}$, and outputs $(m_{0}, \sigma_{0})$ at the end of interaction.
Then, $\A$ outputs a trivial forgery $(m^*, \sigma^*) = (m_{1}, \sigma_{1})$.
This problem occurs when $\A$ pretends to have obtained $(m_{0}, \sigma_{0})$ in the 2nd signing interaction.
Since the security model lacks a countermeasure for this trivial attack, $\A$ succeeds in this trivial attack.
The unforgeability security models in previous works \cite{YLTTM22,ZLH22} are based on the model by Tso et al. \cite{TOO08}.
This trivial attack also works for these security models as well.

Note that we only claim that the security model in \cite{TOO08} has problems.
We do not intend to claim that existing constructions in \cite{Chen94,TOO08,YLTTM22,ZLH22} are insecure.
Under an appropriate unforgeability security model, it may be possible to prove the security for these constructions.
Some constructions seem to have structures that can address this trivial attack.
Reassessing the unforgeability security of these constructions in an appropriate model is a future work.

\item{\bf Problem 3: Missing Adversary Strategy:}
The security game does not capture an adversary with the following strategy.
Let us consider an adversary $\A$ that executes the signing query only once.
$\A$ interacts with the signer with a message list $M$ and intends to obtain a signature $\sigma^*$ on a message $m^* \notin M$, but give up outputting $(m, \sigma)$ where $m \in M$ at the end of the signing query.
The game only considers the adversary that outputting $(m, \sigma)$ where $m \in M$ at the end of the signing query.
At the heart of unforgeability security, we should guarantee that $\A$ cannot obtain a signature on a message which is not in the list $M$. Thus, we should consider this adversary strategy.
\end{itemize}

\subsection{Our Contribution}
The first contribution is providing a new security definition of the unforgeability security for a $1$-out-of-$n$ oblivious signature scheme.
We address the problems described in the previous section.
We refer the reader to Section \ref{SubsecNewUnf} for more detail on our unforgeability security definition.

The second contribution is an improvement of a generic construction of $1$-out-of-$n$ oblivious signature schemes by \cite{YLTTM22}.
This round-optimal construction is obtained by a simple combination of a digital signature scheme and a commitment scheme.
However, a bottleneck of this scheme is the communication cost (See Fig. \ref{CompareGenericOS}). 
\begin{figure}[h]
\begin{center}
\begin{tabular}{|c||c|c|c|c|c|}\hline
Scheme & $|\vkOS|$ & $|\mu|$ & $|\rho|$ & $|\sigmaOS|$\\
\hline
\hline
\multirow{2}{*}{
\begin{tabular}{c}
$\OS_{\ZLH}$\\
\cite{ZLH22} 
\end{tabular}
}&
\multirow{2}{*}{$ |\vkDS|$} &
\multirow{2}{*}{$|c^{\COM}|$} &
\multirow{2}{*}{~$n|\sigmaDS|$} & 
\multirow{2}{*}{$|\sigmaDS| + |c^{\COM}| + |r^{\COM}|$}\\
&&&&\\
\hline
\multirow{2}{*}{
\begin{tabular}{c}
$\OS_{\Ours}$\\
\S \ref{SubsecOurGenCon}
\end{tabular}{}
}& 
\multirow{2}{*}{$|\vkDS|$} & 
\multirow{2}{*}{$|c^{\COM}|$} &
\multirow{2}{*}{$|\sigmaDS|$} &
\multirow{2}{*}{$|\sigmaDS| + |c^{\COM}| + |r^{\COM}| + (\lceil \log_{2}n \rceil + 1)\lambda + \lceil \log_{2}n \rceil $}  \\
&&&&\\
\hline
\end{tabular}\\
\end{center}
\caption{\small Comparison with generic construction of $1$-out-of-$n$ oblivious signature schemes.}
$|\vkOS|$ represents the bit length of the verification key, $|\mu|$ represents the bit length of the first communication, $|\rho|$ represents the bit length of the second communication, and $|\sigmaOS|$ represents the bit length of the $1$-out-of-$n$ oblivious signature scheme.
In columns, $\lambda$ denotes a security parameter.
$|c^{\COM}|$ (resp. $|r^{\COM}|$) denotes the bit length of a commitment (resp. randomness) and $|\sigmaDS|$ (resp. $|\vkDS|$) denotes the bit length of a digital signature (resp. verification key) used to instantiate the $1$-out-of-$n$ oblivious signature scheme.
\label{CompareGenericOS}
\end{figure}

Particular, if the user interacts with the signer with a message list $M=(m_{i})_{i \in \{1,\dots, n\}}$ and the first communication message $\mu$, then the signer sends $n$ digital signatures $(\sigma^{\DS}_{i})_{i \in \{1,\dots, n\}}$ to the user as the second communication message where $\sigma^{\DS}_{i}$ is a signature on a message $(m_{i}, \mu)$.
This means that the second communication message cost (size) is proportional to $n$.

We improve the second communication cost by using a Merkle tree.
Concretely, instead of signing each $(m_{i}, \mu)$ where $m_{i} \in M$, we modify it to sign a message $(\sfroot,  \mu)$ where $\sfroot$ is a root of the Merkle tree computed from $M$.
By this modification, we reduce the communication cost of the second round from $n$ digital signatures to only one digital signature.
As a side effect of our modification, the size of the obtained $1$-out-of-$n$ oblivious signature is increasing, but it is proportional to $\log {n}$.
Our modification has the merit that the sum of a second communication message size and a signature size is improved from $O(n)$ to $O(\log{n})$.

\subsection{Road Map}
In Section \ref{SecPrelimi}, we introduce notations and review commitments, digital signatures, and Merkle tree.
In Section \ref{SecONOSRev}, we review $1$-out-of-$n$ oblivious signatures, revisit the definition of unforgeability by Tuo et al. \cite{TOO08}, and redefine unforgeability.
In Section \ref{SecOurConfromZLH}, we give a generic construction of $1$-out-of-$n$ oblivious signature schemes with efficient communication cost by improving the construction by Zhou et al. \cite{ZLH22} and prove security for our scheme.
In Section \ref{SecConclude}, we conclude our result and discuss open problems.

\section{Preliminaries}\label{SecPrelimi}
In this section, we introduce notations and review fundamental cryptographic primitives for constructing our $1$-out-of-$n$ oblivious signature scheme.

\subsection{Notations}
Let $1^{\lambda}$ be the security parameter. 
A function $f$ is negligible in $k$ if $f(k) \leq 2^{-\omega(\log k)}$.
For a positive integer $n$, we define $[n]: =\{1,\dots, n\}$.
For a finite set $S$, $s \xleftarrow{\$} S$ represents that an element $s$ is chosen from $S$ uniformly at random.

For an algorithm $\A$, $y \leftarrow \A(x)$ denotes that the algorithm $\A$ outputs $y$ on input~$x$.
When we explicitly show that $\A$ uses randomness $r$, we denote $y \leftarrow \A(x; r)$.
We abbreviate probabilistic polynomial time as PPT.

We use a code-based security game \cite{BR06}. 
The game $\sfGame$ is a probabilistic experiment in which adversary $\A$ interacts with an implied challenger $\C$ that answers oracle queries issued by $\A$. 
The $\sfGame$ has an arbitrary amount of additional oracle procedures which describe how these oracle queries are answered.
When the game $\sfGame$ between the challenger $\C$ and the adversary $\A$ outputs $b$, we write $\sfGame_{\A} \Rightarrow b$.
We say that $\A$ wins the game $\sfGame$ if $\sfGame_{\A} \Rightarrow 1$.
We implicitly assume that the randomness in the probability term $\Pr[\sfGame_{\A} \Rightarrow 1]$ is over all the random coins in the game.

\subsection{Commitment Scheme}
We review a commitment scheme and its security notion.
\begin{definition}[Commitment Scheme]
A commitment scheme $\COM$ consists of a following tuple of algorithms $(\nCOMKeyGen, \nCOMCommit)$.
\begin{itemize}
\item $\nCOMKeyGen (1^{\lambda}):$ A key-generation algorithm takes as an input a security parameter $1^{\lambda}$. It returns a commitment key $\ck$.
In this work, we assume that $\ck$ defines a message space, randomness space, and commitment space. We represent these space by  $\mathcal{M}_{\ck}$, $\Omega_{\ck}$, and $\mathcal{C}_{\ck}$, respectively.
\item $\nCOMCommit (\ck, m; r):$ A commit algorithm takes as an input a commitment key $\ck$, a message $m$, and a randomness $r$. It returns a commitment $c$.
In this work, we use the randomness $r$ as the decommitment (i.e., opening) information for $c$.
\end{itemize}
\end{definition}

\begin{definition}[Computational Hiding]
Let $\COM = (\nCOMKeyGen, \nCOMCommit)$ a commitment scheme and $\A$ a PPT algorithm.
We say that the $\COM$ satisfies computational hiding if for all $\A$, the following advantage of the hiding game
\begin{equation*}
\begin{split}
\Adv^{\Hide}_{\COM, \A}({\lambda}):= \left|
\Pr \left[
b = b^* \middle|
\begin{split}
&\ck \leftarrow \COMKeyGen(1^{\lambda}), (m_{0}, m_{1}, \st) \leftarrow \A(\ck),\\
&b \xleftarrow{\$} \{0, 1\},
c^* \leftarrow \COMCommit(\ck, m_{b}), b^* \leftarrow \A(c^*, \st)  
\end{split}
\right]
- \frac{1}{2}
\right|
\end{split}
\end{equation*}
is negligible in $\lambda$.

\end{definition}
\begin{definition}[Strong Computational Binding]
Let $\COM = (\nCOMKeyGen, \nCOMCommit)$ a commitment scheme and $\A$ a PPT algorithm.
We say that the $\COM$ satisfies strong computational binding if the following advantage 
\begin{equation*}
\begin{split}
&\Adv^{\sBind}_{\COM, \A}({\lambda}): = \Pr \left[
\begin{split}
&\nCOMCommit(\ck, m; r) = \nCOMCommit(\ck, m'; r')\\
 &\land (m, r) \neq (m', r') 
\end{split}
\middle|
\begin{split}
&\ck \leftarrow \nCOMKeyGen(1^{\lambda}),\\
&((m, r), (m', r')) \leftarrow\A(\ck)
\end{split}
 \right]
\end{split}
\end{equation*}
is negligible in $\lambda$.
\end{definition}

A commitment scheme with computational hiding and strong computational binding property can be constructed from a public key encryption (PKE) scheme with indistinguishable under chosen plaintext attack (IND-CPA) security.
We refer the reader to \cite{ZLH22} for a commitment scheme construction from a PKE scheme.

\subsection{Digital Signature Scheme}
We review a digital signature scheme and its security notion.
\begin{definition}[Digital Signature Scheme]
A digital signature scheme $\DS$ consists of following four algorithms $(\nDSSetup, \nDSKeyGen, \allowbreak \nDSSign, \nDSVerify)$.
\begin{itemize}
\item $\nDSSetup (1^{\lambda}):$ A setup algorithm takes as an input a security parameter $1^{\lambda}$. It returns the public parameter $\pp$.
In this work, we assume that $\pp$ defines a message space and represents this space by  $\mathcal{M}_{\pp}$.
We omit a public parameter $\pp$ in the input of all algorithms except for $\nDSKeyGen$.
\item $\nDSKeyGen (\pp):$ A key-generation algorithm takes as an input a public parameter $\pp$. It returns a verification key $\vk$ and a signing key $\sk$.
\item $\nDSSign (\sk, m):$ A signing algorithm takes as an input a signing key $\sk$ and a message $m$. It returns a signature $\sigma$.
\item $\nDSVerify (\vk, m, \sigma):$ A verification algorithm takes as an input a verification key $\vk$, a message $m$, and a signature $\sigma$.
It returns a bit $b \in  \{0, 1\}$.
\end{itemize}
\paragraph{\bf Correctness.}
$\DS$ satisfies correctness if for all $\lambda \in \N$, $\pp \leftarrow \nDSSetup (1^{\lambda})$ for all $m \in \mathcal{M}_{\pp}$, $(\vk, \sk) \leftarrow \nDSKeyGen(\pp)$, and $\sigma \leftarrow \nDSSign(\sk, m)$, $\nDSVerify(\vk, m, \sigma) = 1$ holds.
\end{definition}

We review a security notion called the strong existentially unforgeable under chosen message attacks $(\rmsEUFCMA)$ security for digital signature.

\begin{definition}[sEUF-CMA Security]
Let $\DS=(\nDSSetup, \nDSKeyGen, \allowbreak \nDSSign, \nDSVerify)$ be a signature scheme and $\A$ a PPT algorithm.
The strong existentially unforgeability under chosen message attacks (sEUF-CMA) security for $\DS$ is defined by the sEUF-CMA security game $\sfGame^{\sEUFCMA}_{\DS, \A}$ between the challenger $\C$ and $\A$ in Fig. \ref{sEUFCMAgame}.

\begin{figure}[h]
\centering
\begin{tabular}{|l|}
\hline
GAME $\sfGame^{\sEUFCMA}_{\DS, \A}(1^{\lambda}):$\\
~~~$\mathbb{L}^{\Sign} \leftarrow \{\}$, $\pp \leftarrow \nDSSetup (1^{\lambda})$, $(\vk, \sk) \leftarrow \nDSKeyGen(\pp)$, $(m^*, \sigma^*) \leftarrow \A^{\mathcal{O}^{\Sign}(\cdot)}(\pp, \vk)$\\
~~~If $\nDSVerify(\vk, m^*, \sigma^*) = 1 \land ~ (m^*, \sigma^*) \notin \mathbb{L}^{\Sign}$, return $1$. Otherwise return $0$.\\
\\
Oracle $\mathcal{O}^{\Sign}(m):$\\
~~~$\sigma \leftarrow \nDSSign(\sk, m)$, $\mathbb{L}^{\Sign} \leftarrow \mathbb{L}^{\Sign} \cup \{(m, \sigma)\}$, return $\sigma$.\\
\hline
\end{tabular}
\caption{\small The $\rmsEUFCMA$ security game $\sfGame^{\sEUFCMA}_{\DS, \A}$.}
\label{sEUFCMAgame}
\end{figure}

The advantage of an adversary $\A$ for the $\rmsEUFCMA$ security game is defined by $\Adv^{\sEUFCMA}_{\DS, \A}(\lambda)\allowbreak:= \Pr[\sfGame^{\sEUFCMA}_{\DS, \A}(1^{\lambda}) \Rightarrow 1]$.
$\DS$ satisfies $\rmsEUFCMA$ security if for all PPT adversaries $\A$, $\Adv^{\sEUFCMA}_{\DS, \A}({\lambda})$ is negligible in $\lambda$.
\end{definition}

\subsection{Merkle Tree Technique}\label{SubsecCRHandMer}
We review the collision resistance hash function family and the Merkle tree technique.

\begin{definition}[Collision Resistance Hash Function Family]
Let $\mathcal{H} = \{H_{\lambda}\}$ be a family of hash functions where $H_{\lambda} = \{H_{\lambda, i}: \{0, 1\}^* \rightarrow \{0, 1\}^{\lambda} \}_{i \in \mathcal{I}_{\lambda}}$.
$\mathcal{H}$ is a family of collision-resistant hash functions if for all PPT adversaries $\A$, the following advantage
\begin{equation*}
\Adv^{\Coll}_{\mathcal{H}, \A}({\lambda}):=\Pr [H(x) = H(x')|H \xleftarrow{\$} H_{\lambda}, (x, x') \leftarrow \A(H)]
\end{equation*}
is negligible in $\lambda$.
\end{definition}

\begin{definition}[Merkle Tree Technique  \cite{Mer87}]
The Merkle tree technique $\MT$ consists of following three algorithms $(\MerkleTree, \allowbreak \MerklePath,  \allowbreak \RootReconstruct)$ with access to a common hash function $H:\{0, 1\}^{*} \allowbreak \rightarrow \{0, 1\}^{\lambda}$.
\begin{itemize}
\item $\MerkleTree^{H} (M=(m_{0}, \dots, m_{2^{k}-1})):$ A Merkle tree generation algorithm takes as an input a list of $2^{k}$ elements $M=(m_{0}, \dots, m_{2^{k}-1})$.
It constructs a complete binary tree whose height is $k+1$ (i.e., maximum level is $k$).

We represent a root node as $w_{\epsilon}$ and a node in level $\ell$ as $w_{b_{1}, \dots b_{\ell}}$ where $b_{j} \in \{0, 1\}$ for $j \in [\ell]$.
The leaf node with an index $i \in \{0, \dots, 2^{k}-1\}$ represents $w_{\ItoB(i)}$ where $\ItoB$ is a conversion function from an integer $i$ to the $k$-bit binary representation. 

Each leaf node with an index $i \in \{0, \dots, 2^{k}-1\}$ (i.e., $w_{\ItoB(i)}$) is assigned a value $h_{\ItoB(i)} = H(m_{i})$.
Each level $j$ internal (non-leaf) node $w_{b_{1}, \dots b_{j}}$ is assigned a hash value $h_{b_{1}, \dots b_{j}} = H(h_{b_{1}, \dots b_{j},0}||h_{b_{1}, \dots b_{j},1})$ where $h_{b_{1}, \dots b_{j},0}$ and $h_{b_{1}, \dots b_{j},1}$ are values assigned to the left-children node $w_{b_{1}, \dots b_{j}, 0}$ and the right-children node $w_{b_{1}, \dots b_{j}, 1}$, respectively.
The root node $w_{\epsilon}$ is assigned a hash value $h_{\epsilon} = H(h_{0}||h_{1})$ and denote this value as $\sfroot$.
This algorithm outputs a value $\sfroot$ and the description $\sftree$ which describes the entire tree.

\item $\MerklePath^{H}(\sftree, i):$ A Merkle path generation algorithm takes as an input a description of a tree $\sftree$ and a leaf node index $i \in \{0, \dots, 2^{k}-1\}$.
Then, this algorithm computes $(b_{1}, \dots b_{k}) = \ItoB(i)$ and outputs a list $\sfpath = (h_{\overline{b_{1}}}, h_{b_{1}, \overline{b_{2}}}, \dots, h_{b_{1}, \dots, \overline{b_{k}}})$ where $\overline{b_{j}} = 1 - b_{j}$ for $j \in [k]$.

\item $\RootReconstruct^{H}(\sfpath, m_{i}, i):$ A root reconstruction algorithm takes as an input a list $\sfpath = (h_{\overline{b_{1}}}, h_{b_{1}, \overline{b_{2}}}, \dots, h_{b_{1}, \dots, \overline{b_{k}}})$, an element $m_{i}$, and a leaf node index $i \in \{0, \dots, 2^{k}-1\}$.
This algorithm computes $(b_{1}, \dots b_{k}) = \ItoB(i)$ and assigns $h_{b_{1}, \dots b_{k}}$.
For $i = k-1$ to $1$, computes $h_{b_{1}, \dots b_{j}} = H(h_{b_{1}, \dots b_{j},0}||h_{b_{1}, \dots b_{j},1})$ and outputs $\sfroot = H(h_{0}||h_{1})$.
\end{itemize}
\end{definition}

\begin{lemma}[Collision Extractor for Merkle Tree]
There exists the following efficient collision extractor algorithms $\Ext_{1}$ and $\Ext_{2}$.
\begin{itemize}
\item $\Ext_{1}$ takes as an input a description of Merkle tree $\sftree$ whose root node is assigned value $\sfroot$ and $(m_{i}', \sfpath, i)$.
If $\sftree$ is constructed from a list $M = (m_{0}, \dots, m_{2^{k}-1})$, $m_{i} \neq m'_{i}$, and $\sfroot = \RootReconstruct^{H}(\sfpath, m_{i}, i)$ holds, it outputs a collision of the hash function $H$.

\item $\Ext_{2}$ takes as an input a tuple $(m, j, \sfpath, \sfpath')$.
If $\sfpath \neq \sfpath'$ and $\RootReconstruct^{H}(\sfpath, \allowbreak m, \allowbreak j) \allowbreak = \RootReconstruct^{H}(\sfpath', m, j)$ hold, it outputs a collision of the hash function $H$.
\end{itemize}
\end{lemma}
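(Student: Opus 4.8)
The plan is to prove both claims through a single combinatorial lemma: whenever two executions of $\RootReconstruct^{H}$ on the \emph{same} leaf index produce the \emph{same} root but are fed inputs $(\text{leaf}, \sfpath)$ that differ in at least one coordinate, one can read off a collision of $H$ by walking down the path from the root and locating the first level where the two hash inputs diverge. I would instantiate this lemma twice. For $\Ext_{2}$ the two runs are literally $\RootReconstruct^{H}(\sfpath, m, j)$ and $\RootReconstruct^{H}(\sfpath', m, j)$, which share the leaf $m$ and differ in the path since $\sfpath \neq \sfpath'$. For $\Ext_{1}$ the second run is $\RootReconstruct^{H}(\sfpath, m_{i}', i)$, while the first run is the \emph{genuine} reconstruction obtained by reading the real leaf $m_{i}$ and the genuine sibling list $\MerklePath^{H}(\sftree, i)$ out of $\sftree$; by correctness of the Merkle tree this first run recomputes exactly the stored $\sfroot$, so both runs reach the same root, and they differ at the leaf because $m_{i} \neq m_{i}'$. (Note that the $\sfpath$ supplied to $\Ext_{1}$ need not equal the genuine path, which is why I compare the given run against the genuine \emph{reconstruction} rather than against the node values stored in $\sftree$.)

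For the core lemma, write the level values of run~A as $v_{k}^{A} = H(\text{leaf}^{A}), v_{k-1}^{A}, \dots, v_{0}^{A} = \sfroot$, where $v_{j-1}^{A} = H(c_{j}^{A})$ and $c_{j}^{A}$ is the concatenation of $v_{j}^{A}$ with the level-$j$ sibling $s_{j}^{A}$, placed in the order dictated by the bits of the shared index; write $v_{j}^{B}, c_{j}^{B}, s_{j}^{B}$ analogously. The key step is a downward induction from the root under the contrapositive assumption that no collision is output. Since $v_{0}^{A} = v_{0}^{B}$, if $c_{1}^{A} \neq c_{1}^{B}$ we already have $H(c_{1}^{A}) = H(c_{1}^{B})$ with distinct arguments, a collision; otherwise $c_{1}^{A} = c_{1}^{B}$, and because the split point of the concatenation is fixed by the common index, concatenation is injective, forcing $v_{1}^{A} = v_{1}^{B}$ and $s_{1}^{A} = s_{1}^{B}$. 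Iterating, let $j^{\star}$ be the smallest level with $c_{j^{\star}}^{A} \neq c_{j^{\star}}^{B}$; by minimality $v_{j^{\star}-1}^{A} = v_{j^{\star}-1}^{B}$, so $H(c_{j^{\star}}^{A}) = H(c_{j^{\star}}^{B})$ is a collision and the extractor outputs $(c_{j^{\star}}^{A}, c_{j^{\star}}^{B})$.

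It remains to argue that such a $j^{\star}$ exists, i.e.\ that the induction cannot survive all the way to the leaf. If every $c_{j}^{A} = c_{j}^{B}$, then injectivity propagates down to $v_{k}^{A} = v_{k}^{B}$ and $s_{j}^{A} = s_{j}^{B}$ for all $j$; that is, $H(\text{leaf}^{A}) = H(\text{leaf}^{B})$ and the two sibling lists coincide. For $\Ext_{2}$ the latter says $\sfpath = \sfpath'$, contradicting the hypothesis, so the induction must break at some internal level. For $\Ext_{1}$ the former says $H(m_{i}) = H(m_{i}')$; since $m_{i} \neq m_{i}'$ this pair is itself a collision, which the extractor outputs. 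Thus in every case a collision is produced, and both extractors run in time linear in the path length $k = \lceil \log_{2} n\rceil$.

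The one point demanding care — the main obstacle, such as it is — is the left/right bookkeeping at internal nodes: the concatenation order at level $j$ depends on the index bit $b_{j}$, so I must confirm that in both compared runs the path-child and sibling occupy the same sides (guaranteed precisely because the index $i$, resp.\ $j$, is shared), as this is exactly what makes the concatenation injective and pins each divergence to a single coordinate. The cryptographic content is otherwise nil: everything reduces to the definitional fact that distinct inputs with equal $H$-image constitute a collision, and the proof is the telescoping first-divergence argument above.
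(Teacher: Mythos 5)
Your proof is correct. For comparison: the paper offers no proof of this lemma at all---it is stated as a known fact about Merkle trees and used as a black box in the proof of Theorem~2---so there is no ``paper's route'' to measure you against; your write-up simply supplies the standard first-divergence argument that the authors leave implicit. Your instantiation is sound on the two points that actually require care: for $\Ext_{1}$ you correctly compare the adversary's run of $\RootReconstruct^{H}(\sfpath, m_{i}', i)$ against the \emph{genuine reconstruction} from $\sftree$ (rather than against the stored node values directly), which handles the case where the supplied $\sfpath$ differs from $\MerklePath^{H}(\sftree,i)$; and your exhaustiveness argument correctly splits into ``divergence at an internal level'' versus ``all concatenations agree,'' with the latter yielding $\sfpath=\sfpath'$ (contradiction) for $\Ext_{2}$ and the leaf collision $H(m_{i})=H(m_{i}')$ for $\Ext_{1}$. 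One trivial remark: the injectivity of the concatenation step holds because both halves are fixed-length $\lambda$-bit hash values, independently of the index bit; the shared index is needed only so that path-child and sibling occupy the same sides in both runs, which you do flag. (You have also silently corrected what appears to be a typo in the lemma statement, where the reconstruction condition should be on $m_{i}'$ rather than $m_{i}$; your reading matches how $\Ext_{1}$ is invoked in the proof of Theorem~2.)
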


\section{Security of Oblivious Signatures Revisited}\label{SecONOSRev}
In this section, first, we review a definition of a $1$-out-of-$n$ signature scheme and security notion called ambiguity.
Next, we review the security definition of the unforgeability in \cite{TOO08} and discuss the problems of their security model.
Then, we redefine the unforgeability security for a $1$-out-of-$n$ signature scheme.
  
\subsection{(1, n)-Oblivious Signature Scheme}

We review a syntax of a $1$-out-of-$n$ oblivious signature scheme and the security definition of ambiguity.

\begin{definition}[Oblivious Signature Scheme]
a $1$-out-of-$n$ oblivious signature scheme $\onOS$ consists of following algorithms $(\nOSSetup, \nOSKeyGen, \allowbreak \nOSUi, \nOSSii, \nOSUDer,  \allowbreak \nOSVerify)$.
\begin{itemize}
\item $\nDSSetup (1^{\lambda}):$ A setup algorithm takes as an input a security parameter $1^{\lambda}$. It returns the public parameter $\pp$.
In this work, we assume that $\pp$ defines a message space and represents this space by  $\mathcal{M}_{\pp}$.
We omit a public parameter $\pp$ in the input of all algorithms except for $\nOSKeyGen$.
\item $\nOSKeyGen (\pp):$ A key-generation algorithm takes as an input a public parameter $\pp$. It returns a verification key $\vk$ and a signing key $\sk$.
\item $\nOSUi (\vk, M=(m_{0}, \dots, m_{n-1}), j):$ This is a first message generation algorithm that is run by a user. It takes as input a verification key $\vk$, a list of message $M=(m_{0}, \dots, m_{n-1})$, and a message index $j \in \{0, \dots, n-1\}$.
It returns a pair of a first message and a state $(\mu, \st)$ or $\bot$.
\item $\nOSSii (\vk, \sk, M=(m_{0}, \dots, m_{n-1}), \mu):$ This is a second message generation algorithm that is run by a signer. It takes as input a verification key $\vk$, a signing key $\sk$, a list of message  $M=(m_{0}, \dots, m_{n-1})$, and a first message~$\mu$.
It returns a second message $\rho$ or $\bot$.
\item $\nOSUDer(\vk, \st, \rho):$ This is a signature derivation algorithm that is run by a user. 
It takes as an input a verification key $\vk$, a state $\st$, and a second message $\rho$.
It returns a pair of a message and its signature $(m, \sigma)$ or $\bot$. 
\item $\nOSVerify (\vk, m, \sigma):$ A verification algorithm takes as an input a verification key $\vk$, a message $m$, and a signature $\sigma$.
It returns a bit $b \in  \{0, 1\}$.
\end{itemize}
\paragraph{\bf Correctness.}
$\onOS$ satisfies correctness if for all $\lambda \in \N$, $n \leftarrow n(\lambda)$, $\pp \leftarrow \nOSSetup (1^{\lambda})$, for all message set $ \mathcal{M}=(m_{0}, \dots, m_{n-1})$ such that $m_{i} \in \mathcal{M}_{\pp}$,  $(\vk, \sk) \leftarrow \nOSKeyGen(\pp)$, for all $j \in \{0, \dots n-1\}$, $(\mu, \st) \leftarrow \nOSUi(\vk, M, j)$, $\rho \leftarrow \nOSSii (\vk, \sk, M, \mu)$, and $(m_{j}, \sigma) \leftarrow \nOSUDer(\vk, \st, \rho)$, $\nDSVerify(\vk, m_{j}, \sigma) = 1$ holds.
\end{definition}

\begin{definition}[Ambiguity]
Let $\onOS = (\nOSSetup, \nOSKeyGen, \allowbreak \nOSUi, \nOSSii, \nOSUDer, \nOSVerify)$ be an oblivious signature scheme and $\A$ a PPT algorithm.
The ambiguity for $\onOS$ is defined by the ambiguity security game $\sfGame^{\Amb}_{\onOS, \A}$ between the challenger $\C$ and $\A$ in Fig. \ref{Ambgame}.

\begin{figure}[h]
\centering
\begin{tabular}{|l|}
\hline
GAME $\sfGame^{\Amb}_{\onOS, \A}(1^{\lambda}):$\\
~~~$\pp \leftarrow \nOSSetup (1^{\lambda})$, $(\vk, \sk) \leftarrow \nOSKeyGen(\pp)$,\\
~~~$(M=(m_{0}, \dots, m_{n-1}), i_{0}, i_{1}, \st_{\A}) \leftarrow \A(\pp, \vk, \sk)$\\
~~~$b \xleftarrow{\$} \{0, 1\}$, $(\mu, \st_{\sfS}) \leftarrow \nOSUi(\vk, M, i_{b})$, $b^* \leftarrow \A(\mu, \st_{\A})$.\\
~~~If $b^* = b$ return $1$. Otherwise return $0$.\\
\hline
\end{tabular}
\caption{\small The ambiguity security game $\sfGame^{\Amb}_{\onOS, \A}$.}
\label{Ambgame}
\end{figure}

The advantage of an adversary $\A$ for the ambiguity security game is defined by $\Adv^{\Amb}_{\onOS, \A}({\lambda}) \allowbreak:= | \Pr[\sfGame^{\Amb}_{\onOS, \A}(1^{\lambda}) \Rightarrow 1] - \frac{1}{2} |$.
$\onOS$ satisfies ambiguity if for all PPT adversaries $\A$, $\Adv^{\Amb}_{\onOS, \A}({\lambda})$ is negligible in $\lambda$.
\end{definition}

\subsection{Definition of Unforgeability Revisited}\label{Subsec_Def_Unf_Prev}

We review the security definition of unforgeability for $\onOS$ in previous works in \cite{TOO08}.
The unforgeability for a $1$-out-of-$n$ oblivious signature scheme in  \cite{TOO08} is formalized by the following game between a challenger $\C$ and a PPT adversary~$\A$.
\begin{itemize}
\item $\C$ runs $\pp \leftarrow \nOSSetup (1^{\lambda})$ and $(\vk, \sk) \leftarrow \nOSKeyGen(\pp)$, and gives $(\pp, \vk)$ to $\A$.
\item $\A$ is allowed to engage polynomially many  signing protocol executions.

In an $i$-th $(1 \leq i \leq t)$ protocol execution,  
\begin{itemize}
\item $\A$ makes a list $M_{i}=(m_{i, 0}, \dots, m_{i, n-1})$ and chooses $m_{i, j_{i}}$.
\item $\A$ sends $(\mu_{i}, M_{i}=(m_{i, 0}, \dots, m_{i, n-1}))$ to $\C$. 
\item $\C$ runs $\rho_{i} \leftarrow  \nOSSii (\vk, \sk, M_{i}, \mu_{i})$ and gives $\rho_{i}$ to $\A$. 
\end{itemize}
\item Let $\mathbb{L}^{\Sign}=\{m_{1, j_{1}}, \dots m_{t, j_{t}}\}$ be a list of messages that $\A$ has obtained signatures.
$\A$ outputs a forgery $(m^*, \sigma^*)$ which satisfies $m^* \notin \mathbb{L}^{\Sign}$. $\A$ must complete all singing executions before it outputs a forgery.
\end{itemize}
If no PPT adversary $\A$ outputs a valid forgery in negligible probability in $\lambda$, $\onOS$ satisfies the unforgeability security.

We point out three problems for the above security definition.
\begin{itemize}
\item{\bf Problem 1: How to Store Messages in $\mathbb{L}^{\Sign}$:}
In the above security game, $\C$ needs to store corresponding messages that $\A$ has obtained signatures.
However, by ambiguity property, $\C$ cannot identify the message $m_{i, j_{i}}$ which is selected by the $\A$ from a transcription of the $i$-th interaction with $M_{i}$.
This security model does not explain how to record an entry of $\mathbb{L}^{\Sign}$.

\item{\bf Problem 2:  Trivial Attack:}
Let us consider the following adversary $\A$ that runs signing protocol execution twice.
$\A$ chooses $M=(m_{0}, m_{1})$ where $m_{0}$ and $m_{1}$ are distinct, and sets lists as $M_{1}=M_{2} = M$.
In the 1st interaction, $\A$ chooses $m_{0} \in M_{1}$, obtains a signature $\sigma_{0}$ on a message $m_{0}$, and outputs $(m_{0}, \sigma_{0})$ at the end of interaction.
In the 2nd interaction, $\A$ chooses $m_{1} \in M_{2}$, obtains a signature $\sigma_{1}$ on a message $m_{1}$, and outputs $(m_{0}, \sigma_{0})$ at the end of interaction.
Then, $\A$ outputs a trivial forgery $(m^*, \sigma^*) = (m_{1}, \sigma_{1})$.
This attack is caused by the resubmitting of a signature $(m_{0}, \sigma_{0})$ at the end of the signing interaction.
$\A$ pretends to have obtained $(m_{0}, \sigma_{0})$ in the 2nd signing interaction.
However, there is no countermeasure in the security model.

\item{\bf Problem 3: Missing Adversary Strategy:}
The security game does not capture an adversary with the following strategy.
Let us consider an adversary $\A$ that executes the signing query only once.
$\A$ interacts with the signer with a message list $M$ and intends to forge a signature $\sigma^*$ on a message $m^* \notin M$, but give up outputting $(m, \sigma)$ where $m \in M$ at the end of signing query.
Since the security game only considers the adversary that outputting $(m, \sigma)$ where $m \in M$ at the end of the signing query, the security game cannot capture the this adversary $\A$.

\end{itemize}

\subsection{New Unforgeability Definition}\label{SubsecNewUnf}
We modify the unforgeability security model by Tso et al. \cite{TOO08} by addressing the problems in the previous sections in order and redefine the unforgeability security. 
Here, we briefly explain how to address these problems.

\begin{itemize}
\item{\bf Countermeasure for Problem 1:}
This problem is easy to fix by forcing $\A$ to output $(m_{i, j_{i}}, \sigma_{i})$ at the end of each signing interaction.

\item{\bf Countermeasure for Problem 2:}
This attack is caused by the reuse of a signature at the end of signing interactions.
That is $\A$ submits $(m, \sigma)$ twice or more at the end of signing interactions.

To address this problem, we introduce the signature resubmission check.
This prevents resubmission of $(m, \sigma)$ at the end of signing interactions. 
However, this is not enough to prevent the reuse of a signature.
For example, if an oblivious signature has a re-randomizable property (i.e., The property that a signature is refreshed without the signing key),  $\A$ can easily pass resubmission checks by randomizing a obtained signature $ \sigma$ to $ \sigma'$.

For this reason, normal unforgeability security is not enough.
We address this issue by letting strong unforgeability security be a default for the security requirement.
\item{\bf Countermeasure for Problem 3:}
This problem is addressed by adding another winning path for $\A$.
When $\A$ submits $(m^*, \sigma^*)$ at the end of $i$-th signing interaction, if $(m^*, \sigma^*)$ submitted by $\A$ at the end of signing qyery is valid and $m^* \notin M_{i}$, $\A$ wins the game where $M_{i}$ is a list of messages send by $\A$ at the beginning of the $i$-th signing query.
\end{itemize}

By reflecting the above countermeasures to the unforgeability security model by Tso et al. \cite{TOO08}, we redefine the unforgeability security model as the strong unforgeability under chosen message attacks under the sequential signing interaction $(\rmSeqsEUFCMA)$ security.

\begin{definition}[Seq-sEUF-CMA Security]\label{Def_Seq-sEUF-CMA_Security}
Let $\onOS = (\nOSSetup, \nOSKeyGen, \allowbreak \nOSUi, \allowbreak \nOSSii, \nOSUDer, \allowbreak \nOSVerify)$ be a $1$-out-of-$n$ oblivious signature scheme and $\A$ a PPT algorithm.
The strong unforgeability under chosen message attacks under the sequential signing interaction $(\rmSeqsEUFCMA)$ security for $\onOS$ is defined by the $\rmSeqsEUFCMA$ security game $\sfGame^{\SeqsEUFCMA}_{\onOS, \A}$ between the challenger $\C$ and $\A$ in Fig. \ref{EUFCMAgame}.

\begin{figure}[h]
\centering
\begin{tabular}{|l|}
\hline
GAME $\sfGame^{\SeqsEUFCMA}_{\onOS, \A}(1^{\lambda}):$\\
~~~$\mathbb{L}^{\Sign} \leftarrow \{\}$, $\mathbb{L}^{\ListM} \leftarrow \{\}$, $q^{\Sign} \leftarrow 0$, $q^{\Fin} \leftarrow 0$\\
~~~$\pp \leftarrow \nOSSetup (1^{\lambda})$, $(\vk, \sk) \leftarrow \nOSKeyGen(\pp)$, $(m^*, \sigma^*) \leftarrow \A^{\mathcal{O}^{\Sign}(\cdot, \cdot), \mathcal{O}^{\Fin}(\cdot, \cdot)}(\pp, \vk)$\\
~~~If $q^{\Sign}  = q^{\Fin} \land \nOSVerify(\vk, m^*, \sigma^*) = 1 \land  (m^*, \sigma^*) \notin \mathbb{L}^{\Sign} $, return $1$. \\
~~~Otherwise return $0$.\\
\\
Oracle $\mathcal{O}^{\Sign}(M_{q^{\Sign}}, \mu):$\\
~~~If $q^{\Sign} \neq q^{\Fin}$, return $\bot$.\\
~~~$\rho \leftarrow \nOSSii (\vk, \sk, M, \mu)$, if $\rho = \bot$, return $\bot$.\\ 
~~~If $\rho \neq \bot$, $q^{\Sign} \leftarrow q^{\Sign} +1$, $\mathbb{L}^{\ListM} \leftarrow \mathbb{L}^{\ListM} \cup \{(q^{\Sign}, M_{q^{\Sign}})\}$, return $\rho$.\\
Oracle $\mathcal{O}^{\Fin}(m^*, \sigma^*):$\\
~~~If $q^{\Sign} \neq q^{\Fin} + 1$, return $\bot$.\\
~~~If $\nOSVerify(\vk, m^*, \sigma^*) = 0$, return the game output $0$ and abort.\\
~~~(//Oblivious signature resubmission check)\\
~~~\fbox{If $(m^*, \sigma^*) \in \mathbb{L}^{\Sign}$, return the game output $0$ and abort.}\\ 
~~~Retrieve an entry $(q^{\Sign}, M_{q^{\Sign}}) \in \mathbb{L}^{\ListM}$.\\
~~~If $m^* \in M_{q^{\Sign}}$, $\mathbb{L}^{\Sign} \leftarrow \mathbb{L}^{\Sign}  \cup  \{(m^*, \sigma^*)\}$, $q^{\Fin} \leftarrow q^{\Fin} + 1$, return ``$\mathtt{accept}$".\\
~~~(//Capture adversaries that give up completing signing executions in the game.)\\
~~~\fbox{If $m^* \notin M_{q^{\Sign}}$, return  the game output $1$ and abort. }
\begin{tabular}{l}
\vspace{10pt}
\end{tabular}\\
\hline
\end{tabular}
\caption{\small The $\rmSeqsEUFCMA$ security game $\sfGame^{\SeqsEUFCMA}_{\onOS, \A}$. The main modifications from previous works security game are highlighted in \fbox{white box}.}
\label{EUFCMAgame}
\end{figure}

The advantage of $\A$ for the $\rmSeqsEUFCMA$ security game is defined by $\Adv^{\SeqsEUFCMA}_{\onOS, \A}({\lambda}):= \Pr[\sfGame^{\SeqsEUFCMA}_{\onOS, \A}(1^{\lambda}) \Rightarrow 1]$.
$\onOS$ satisfies $\rmSeqsEUFCMA$ security if for all PPT adversaries $\A$, $\Adv^{\SeqsEUFCMA}_{\onOS, \A}({\lambda})$ is negligible in $\lambda$.
\end{definition}

Our security model is the sequential signing interaction model.
One may think that it is natural to consider the concurrent signing interaction model.
However, by extending our model to the concurrent signing setting there is a trivial attack.
We discuss the security model that allows concurrent signing interaction in Section~\ref{SecConclude}.

\section{Our Construction}\label{SecOurConfromZLH}
In this section, first, we review the generic construction by Zhou et al. \cite{ZLH22}.
Second, we propose our new generic construction based on their construction.
Then, we prove the security of our proposed scheme.

\subsection{Generic Construction by Zhou et al. \cite{ZLH22}}
The generic construction of a $1$-out-of-$n$ signature scheme $\onOS_{\ZLH}$ by Zhou et al. \cite{ZLH22} is a combination of a commitment scheme $\COM$ and a digital signature scheme $\DS$.
Their construction $\onOS_{\ZLH}[\COM, \DS] = (\OSSetup, \OSKeyGen, \allowbreak \OSUi, \OSSii, \OSUDer, \allowbreak \OSVerify)$ is given in Fig. \ref{ZLH_OSConst}.
\begin{figure}[h]
\centering
\begin{tabular}{|l|}
\hline
$\OSSetup(1^\lambda):$\\
~~~$\ck \leftarrow \COMKeyGen (1^{\lambda})$, $\ppDS \leftarrow \DSSetup (1^{\lambda})$, return $\ppOS \leftarrow (\ck, \ppDS)$.\\
$\OSKeyGen(\ppOS=(\ck, \ppDS)):$\\
~~~$(\vkDS, \skDS) \leftarrow \DSKeyGen(\ppDS)$, return $(\vkOS, \skOS) \leftarrow (\vkDS, \skDS)$.\\
$\OSUi(\vkOS, M=(m_{0}, \dots, m_{n-1}), j \in \{0, \dots, n-1\}):$\\
~~~$r \xleftarrow{\$} \Omega_{\ck}$, $c \leftarrow \COMCommit(\ck, m; r)$, $\mu \leftarrow c$, $\st \leftarrow (M, c, r, j)$, return $(\mu, \st)$.\\
$\OSSii(\vkOS, \skOS=\skDS, M=(m_{0}, \dots, m_{n-1}), \mu=c):$\\
~~~For $i = 0$ to $n-1$, $\sigmaDS_{i} \leftarrow \DSSign(\skDS, (m_{i}, c))$.\\
~~~Return $\rho \leftarrow (\sigmaDS_{0},\dots, \sigmaDS_{n-1})$.\\
$\OSUDer(\vkOS=\vkDS, \st=(M=(m_{0}, \dots, m_{n-1}), c, r, j), \rho = (\sigmaDS_{0},\dots,\sigmaDS_{n-1})):$\\
~~~For $i = 0$ to $n-1$, if $\DSVerify(\vkDS, (m_{i},c), \sigmaDS_{i})=0$, return $\bot$.\\
~~~$\sigmaOS \leftarrow (c, r, \sigmaDS_{j})$, return $(m_{j}, \sigmaOS)$.\\
$\OSVerify (\vkOS=\vkDS, m, \sigmaOS=(c, r, \sigmaDS):$\\
~~~If $c \neq \COMCommit(\ck, m; r)$, return $0$.\\
~~~If $\DSVerify(\vkDS, (m,c), \sigmaDS)=0$, return $0$.\\
~~~Otherwise return $1$.\\
\hline
\end{tabular}
\caption{\small The generic construction $\onOS_{\ZLH}[\COM, \DS]$. }
\label{ZLH_OSConst}
\end{figure}

We briefly provide an overview of a signing interaction and an intuition for the security of their construction.
In the signing interaction, the user chooses a message list $M= (m_{i})_{i \in \{0, \dots, n-1\}}$ and a specific message $m_{j_{i}}$ that the user wants to obtain the corresponding signature.
To hide this choice from the signer, the signer computes the commitment $c$ on $m_{j}$ with the randomness $r$.
The user sends $(M, \mu=c)$ to the signer.

Here, we provide an intuition for the security of their construction.
From the view of the signer, by the hiding property of the commitment scheme, the signer does not identify $m_{j}$ from $(M, \mu=c)$.
This guarantees the ambiguity of their construction.
The signer computes a signature $\sigmaDS_{i}$ on a tuple $(m_{i}, c)$ for $i\in \{0, \dots, n-1\}$ and sends $\rho = ( \sigmaDS_{i})_{i \in \{0, \dots, n-1\}}$.

If the signer honestly computes $c$ on $m_{j} \in M$, we can verify that $m_{j_{i}}$ is committed into $c$ by decommitting with $r$.
An oblivious signature on $m_{j}$ is obtained as $\sigmaOS = (c, r, \sigmaDS_{j})$.
If a malicious user wants to obtain two signatures for two distinct messages $m, m' \in M$ or obtain a signature on  $m^* \notin M$ from the signing protocol execution output $(M=(m_{i})_{i \in \{0, \dots, n-1\}}, \mu, \rho)$, the malicious user must break either the sEUF-CMA security of $\DS$ or the strong binding property of $\COM$.
This guarantees the unforgeability security of their construction.

A drawback of their construction is the second communication cost.
A second message $\rho$ consists of $n$ digital signatures.
If $n$ becomes large, it will cause heavy communication traffic.
It is desirable to reduce the number of signatures in $\rho$.

\subsection{Our Generic Construction}\label{SubsecOurGenCon}
As explained in the previous section, the drawback of the construction by Zhou et al. \cite{ZLH22} is the size of a second message $\rho$.
To circumvent this bottleneck, we improve their scheme by using the Merkle tree technique.
Concretely, instead of signing on $(m_{i}, c)$ for each $m_{i} \in M$, we modify it to sign on $(\sfroot,  c)$ where $\sfroot$ is a root of the Merkle tree computed from $M$.
This modification allows us to reduce the number of digital signatures included in $\rho$ from $n$ to~$1$.

Now, we describe our construction.
Let $\COM$ be a commitment scheme, $\DS$ a digital signature scheme, $\mathcal{H}=\{H_{\lambda}\}$ a hash function family, and $\MT = (\MerkleTree, \allowbreak \MerklePath,  \allowbreak \RootReconstruct)$ a Merkle tree technique in Section \ref{SubsecCRHandMer}.
To simplify the discussion, we assume that $n>1$ is a power of $2$. \footnote{With the following modification, our scheme also supports the case where $n>1$ is not a power of 2. Let $k$ be an integer such that $2^{k-1}< n < 2^{k}$. We change a list of message $M = (m_{0}, \dots m_{n-1})$ which is given to $\OSUi$ and $\OSSii$ as a part of an input to an augmented message list $M' = (m'_{0}, \dots m'_{2^{k}-1})$ where $m'_{i} = m_{i}$ for $i \in \{0,\dots, n-1\}$, $m'_{n-1+i} = \phi || i$ for $i \in \{1, \dots 2^k -n \}$, and $\phi$ is a special symbol representing that a message is empty.}

Our generic construction $\onOS_{\Ours}[\mathcal{H}, \COM, \DS] = (\OSSetup, \OSKeyGen, \allowbreak \OSUi,\allowbreak \OSSii, \allowbreak \OSUDer, \allowbreak \OSVerify)$ is given in Fig. \ref{Our_OSConst}.
\begin{figure}[h]
\centering
\begin{tabular}{|l|}
\hline
$\OSSetup(1^\lambda):$\\
~~~$H \xleftarrow{\$} H_{\lambda}$, $\ck \leftarrow \COMKeyGen (1^{\lambda})$, $\ppDS \leftarrow \DSSetup (1^{\lambda})$,\\
~~~Return $\ppOS \leftarrow (H, \ck, \ppDS)$.\\
$\OSKeyGen(\ppOS=(\ck, \ppDS)):$\\
~~~$(\vkDS, \skDS) \leftarrow \DSKeyGen(\ppDS)$, return $(\vkOS, \skOS) \leftarrow (\vkDS, \skDS)$.\\
$\OSUi(\vkOS, M=(m_{0}, \dots, m_{n-1}), j \in \{0, \dots, n-1\}):$\\
~~~If there exists $(t, t') \in \{0, \dots, n-1\}^2$ s.t. $t \neq t' \land m_{t} = m_{t'}$, return $\bot$.\\
~~~$r \xleftarrow{\$} \Omega_{\ck}$, $c \leftarrow \COMCommit(\ck, m; r)$, $\mu \leftarrow c$, $\st \leftarrow (M, c, r, j)$, return $(\mu, \st)$.\\
$\OSSii(\vkOS, \skOS=\skDS, M=(m_{0}, \dots, m_{n-1}), \mu=c):$\\
~~~If there exists $(t, t') \in \{0, \dots, n-1\}^2$ s.t.  $t \neq t' \land m_{t} = m_{t'}$, return $\bot$.\\
~~~$(\sfroot, \sftree) \leftarrow \MerkleTree^{H}(M)$, $\sigmaDS \leftarrow \DSSign(\skDS, (\sfroot, c))$.\\
~~~Return $\rho \leftarrow \sigmaDS$.\\
$\OSUDer(\vkOS=\vkDS, \st=(M=(m_{0}, \dots, m_{n-1}), c, r, j), \rho = (\sigmaDS_{1},\dots,\sigmaDS_{n})):$\\
~~~$(\sfroot, \sftree) \leftarrow \MerkleTree^{H}(M)$, $\sfpath \leftarrow \MerklePath^{H}(\sftree, j)$ \\
~~~If $\DSVerify(\vkDS, (\sfroot,c), \sigmaDS)=0$, return $\bot$.\\
~~~$\sigmaOS \leftarrow (\sfroot, c, \sigmaDS, \sfpath, j, r)$, return $(m_{j}, \sigmaOS)$.\\
$\OSVerify (\vkOS=\vkDS, m, \sigmaOS=(\sfroot, c, \sigmaDS, \sfpath, j, r):$\\
~~~If $\sfroot \neq \RootReconstruct^{H}(\sfpath, m, j)$, return $0$.\\
~~~If $c \neq \COMCommit(\ck, m; r)$, return $0$.\\
~~~If $\DSVerify(\vkDS, (\sfroot,c), \sigmaDS)=0$, return $0$.\\
~~~Otherwise return $1$.\\
\hline
\end{tabular}
\caption{\small Our generic construction $\onOS_{\Ours}[\mathcal{H}, \COM, \DS]$. }
\label{Our_OSConst}
\end{figure}

\subsection{Analysis}
We analyze our scheme $\onOS_{\Ours}$.
It is easy to see that our scheme satisfies the correctness.
Now, we prove that our generic construction $\onOS_{\Ours}$ satisfies the ambiguity and the $\rmSeqsEUFCMA$ security.

\begin{theorem}\label{OurconAmb}
If $\COM$ is computational hiding commitment, $\onOS_{\Ours}[\mathcal{H}, \allowbreak \COM, \allowbreak \DS]$ satisfies the ambiguity.
\end{theorem}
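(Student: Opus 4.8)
The plan is to reduce the ambiguity of $\onOS_{\Ours}$ directly to the computational hiding property of $\COM$. The key observation is that in $\onOS_{\Ours}$ the first message is simply $\mu = c = \COMCommit(\ck, m_{i_b}; r)$, which depends on the hidden bit $b$ only through the committed message $m_{i_b}$; neither the hash function $H$, the Merkle tree, nor the digital-signature component is touched in the first round. Hence an adversary that distinguishes the choice $i_0$ from $i_1$ is in effect distinguishing a commitment to $m_{i_0}$ from a commitment to $m_{i_1}$, and should be convertible into a hiding distinguisher.

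Concretely, given a PPT adversary $\A$ against the ambiguity game, I would build a PPT adversary $\B$ against the hiding game as follows. On input $\ck$, $\B$ generates the remaining parameters itself: it samples $H \xleftarrow{\$} H_{\lambda}$, runs $\ppDS \leftarrow \DSSetup(1^{\lambda})$ and $(\vkDS, \skDS) \leftarrow \DSKeyGen(\ppDS)$, and sets $\ppOS \leftarrow (H, \ck, \ppDS)$, $\vkOS \leftarrow \vkDS$, $\skOS \leftarrow \skDS$. It then runs $\A(\ppOS, \vkOS, \skOS)$. When $\A$ returns $(M=(m_{0}, \dots, m_{n-1}), i_{0}, i_{1}, \st_{\A})$, $\B$ submits $(m_{i_{0}}, m_{i_{1}}, \st_{\A})$ as its own hiding challenge, receives the challenge commitment $c^*$, sets $\mu \leftarrow c^*$, runs $\A(\mu, \st_{\A})$, and outputs the bit $b^*$ that $\A$ returns. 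Because $\B$ picks all of the signature-scheme material on its own, it can faithfully hand $\A$ the signing key $\skOS$ that the ambiguity game grants to the adversary (which models a possibly malicious signer); this is precisely why hiding of $\COM$ alone suffices and the signature scheme is irrelevant here.

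Next I would argue that $\B$'s simulation is perfect on every input for which $\OSUi$ does not abort: the commitment $c^*$ returned by the hiding challenger on message $m_{i_b}$ is distributed exactly as the first message $\mu$ output by $\OSUi(\vkOS, M, i_b)$, so $\B$ guesses the hiding bit correctly precisely when $\A$ wins the ambiguity game. This yields the identity $\Adv^{\Hide}_{\COM, \B}(\lambda) = \Adv^{\Amb}_{\onOS_{\Ours}, \A}(\lambda)$ on that branch.

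The only point needing care is the duplicate-message check inside $\OSUi$: if $M$ contains two equal entries then the real ambiguity game sets $\mu = \bot$ independently of $b$, so $\A$'s conditional advantage there is $0$. In the reduction $\B$ simply tests $M$ for duplicates before querying its challenge oracle; on finding any it feeds $\A$ the value $\bot$ and outputs a uniformly random guess, which has conditional advantage $0$ and matches the real game on this branch. Averaging over the two branches, the event ``$M$ has a duplicate'' is determined by $\A$'s first-stage output independently of $b$, so the two advantages coincide unconditionally, and since $\COM$ is computationally hiding the right-hand side is negligible, establishing the theorem. I do not expect any genuine obstacle beyond this bookkeeping of the $\bot$ case; the substantive difficulty of the construction lies entirely in the $\rmSeqsEUFCMA$ proof, not in ambiguity.
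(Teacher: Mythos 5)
Your proposal is correct and follows essentially the same route as the paper: the paper's proof builds exactly this reduction $\B$, which generates $H$, $\ppDS$, and the signing key itself, forwards $(m_{i_0}, m_{i_1})$ to the hiding challenger, uses the challenge commitment as $\mu$, and concludes $\Adv^{\Hide}_{\COM,\B}(\lambda) = \Adv^{\Amb}_{\onOS_{\Ours},\A}(\lambda)$. Your extra bookkeeping for the duplicate-message $\bot$ branch is a minor refinement the paper omits, but it does not change the argument.
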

\begin{proof}
The ambiguity of our scheme can be proven in a similar way in \cite{ZLH22}.
Let $\A$ be an adversary for the ambiguity game of $\onOS_{\Ours}$.
We give a reduction algorithm $\B$ that reduces the ambiguity security of our scheme to the computational hiding property of $\COM$ in Fig. \ref{AmbCOMSimOSour}. 

\begin{figure}[htbp]
\centering
\begin{tabular}{|l|}
\hline
$\B(1^{\lambda}, \ck):$\\
~~~$H \xleftarrow{\$} H_{\lambda}$, $\ppDS \leftarrow \DSSetup (1^{\lambda})$, $\ppOS \leftarrow (H, \ck, \ppDS)$, \\
~~~$(\vkDS, \skDS) \leftarrow \DSKeyGen(\ppDS)$, $(\vkOS, \skOS) \leftarrow (\vkDS, \skDS)$,  \\
~~~$(M=(m_{0}, \dots, m_{n-1}), i_{0}, i_{1}, \st_{\A}) \leftarrow \A(\ppOS, \vkOS, \skOS)$\\
~~~$m^*_{0} \leftarrow m_{i_{0}}$, $m^*_{1} \leftarrow  m_{i_{1}}$, send $(m^*_{0}, m^*_{1})$ to the challenger $\C$ and obtain  \\
~~~~~~a commitment $c^*$ where $c^* \leftarrow \COMCommit(\ck, m^*_{b})$ and $b \xleftarrow{\$} \{0, 1\}$ is chosen $\C$.\\
~~~$b' \leftarrow \A(\mu=c^*, \st_{\A})$, return $b^* \leftarrow b'$.\\
\hline
\end{tabular}
\caption{\small The reduction algorithm $\B$.}
\label{AmbCOMSimOSour}
\end{figure}

Now, we confirm that $\B$ simulates the ambiguity game of $\onOS_{\Ours}$.
In the case that $b=0$, $c^*\leftarrow \COMCommit(\ck, m^*_{0} = m_{i_{0}})$ holds.
$\B$ simulates $\mu$ on the choice of $m_{i_{0}}$ in this case.
Similarly, in the case that $b=1$, $c^*\leftarrow \COMCommit(\ck, m^*_{1} = m_{i_{1}})$ holds.
$\B$ simulates $\mu$ on the choice of $m_{i_{1}}$ in this case.
Since $b$ is chosen uniformly at random from $\{0, 1\}$, $\B$ perfectly simulates the ambiguity game of $\onOS_{\Ours}$.
We can see that $\Adv^{\Hide}_{\COM, \B} ({\lambda})= \Adv^{\Amb}_{\onOS_{\Ours}, \A}({\lambda})$ holds.
Thus, we can conclude Theorem \ref{OurconAmb}.
\qed
\end{proof}

\begin{theorem}\label{OurconEUF}
If $\mathcal{H}$ is a family of collision-resistant hash functions, $\DS$ satisfies the $\rmsEUFCMA$ security, and $\COM$ is a strong computational binding commitment, $\onOS_{\Ours}[\mathcal{H}, \COM, \DS]$ satisfies the $\rmSeqsEUFCMA$ security.
\end{theorem}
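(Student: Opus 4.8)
The plan is to bound $\Adv^{\SeqsEUFCMA}_{\onOS_{\Ours}, \A}(\lambda)$ by the sum of the advantages against the three building blocks, through a reduction that perfectly simulates the $\rmSeqsEUFCMA$ game: it samples $H$ and $\ck$ itself, obtains $(\ppDS, \vkDS)$ from the $\rmsEUFCMA$ challenger, and answers each $\mathcal{O}^{\Sign}(M, \mu=c)$ query by computing $\sfroot$ from $M$ and forwarding $(\sfroot, c)$ to its own signing oracle. The central device is a \emph{charging} map: every valid oblivious signature $\sigmaOS = (\sfroot, c, \sigmaDS, \sfpath, j, r)$ embeds a valid $\DS$ signature $\sigmaDS$ on $(\sfroot, c)$, so I charge it to the signing query that returned that exact pair. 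First I would argue that whenever $\A$ exhibits (inside an $\mathcal{O}^{\Fin}$ call or in its final output) a valid $\DS$ signature on a pair $((\sfroot, c), \sigmaDS)$ not yet returned by the signing oracle at that moment, the reduction halts and outputs it, breaking $\rmsEUFCMA$. Hence I may assume every valid oblivious signature $\A$ produces is charged to an already-answered query.

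The key structural step is an injectivity lemma: two \emph{distinct} valid oblivious signatures cannot be charged to the same query $i^*$ unless strong binding of $\COM$ or collision resistance of $\mathcal{H}$ is broken. Suppose $(m_1, \sigma_1)$ and $(m_2, \sigma_2)$ share the pair $((\sfroot, c), \sigmaDS)$ of query $i^*$, so $\sfroot$ is the Merkle root of the list $M_{i^*}$ (which is duplicate-free, since $\OSSii$ rejects repeated entries) and $c = \mu_{i^*}$. Running $\Ext_1$ on each signature either yields a collision or forces $m_b = m_{i^*, j_b}$, i.e. both messages lie in $M_{i^*}$. If $m_1 \neq m_2$, then $\COMCommit(\ck, m_1; r_1) = c = \COMCommit(\ck, m_2; r_2)$ with $(m_1, r_1) \neq (m_2, r_2)$ is a strong-binding break. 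If $m_1 = m_2$, duplicate-freeness gives $j_1 = j_2$, and then either $r_1 \neq r_2$ (again a binding break on the same message) or $r_1 = r_2$, which forces $\sfpath_1 \neq \sfpath_2$ so that $\Ext_2$ extracts a collision. Every branch violates $\Adv^{\sBind}_{\COM}$ or $\Adv^{\Coll}_{\mathcal{H}}$.

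Finally I would close both winning paths by a counting argument; write $k = q^{\Sign}$ at the moment $\A$ halts or wins. In the final-output path $q^{\Sign} = q^{\Fin} = k$, so the $k$ accepted finalizations (pairwise distinct by the resubmission check) together with the forgery $(m^*, \sigma^*) \notin \mathbb{L}^{\Sign}$ give $k+1$ distinct valid oblivious signatures charged among the $k$ signing queries; pigeonhole and the injectivity lemma yield a binding break or a collision. In the $\mathcal{O}^{\Fin}$ path the winning call satisfies $m^* \notin M_k$ while only $k-1$ queries have been finalized, so I again hold $k$ distinct valid signatures. Here I invoke sequentiality: a $\DS$ signature from query $i_0$ only exists after the $i_0$-th signing call, so the finalization of each query $i_0 \le k-1$ is charged to a query $\le i_0$, and $\Ext_1$ together with $m^* \notin M_k$ forces the winning forgery to be charged to a query strictly below $k$. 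All $k$ signatures are therefore charged among the $k-1$ queries $\{1, \dots, k-1\}$, and pigeonhole once more triggers the injectivity lemma.

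I expect the delicate part to be exactly this last step: neutralizing the give-up adversary of Problem~3 hinges on the sequential ordering of the oracles, and I must carefully track \emph{when} each embedded $\DS$ signature becomes available so that the charging is confined to strictly earlier queries — this is precisely where a concurrent relaxation would fail, consistent with the trivial attack the paper notes for the concurrent model. Collecting the cases gives $\Adv^{\SeqsEUFCMA}_{\onOS_{\Ours}, \A}(\lambda) \le \Adv^{\sEUFCMA}_{\DS, \B_1}(\lambda) + \Adv^{\sBind}_{\COM, \B_2}(\lambda) + \Adv^{\Coll}_{\mathcal{H}, \B_3}(\lambda)$ for the induced reductions $\B_1, \B_2, \B_3$, completing the proof.
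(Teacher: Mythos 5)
Your proposal is correct and follows essentially the same route as the paper: the paper's proof also classifies every winning run into a $\DS$ forgery (a valid $((\sfroot,c),\sigmaDS)$ pair never returned by the signing oracle), a strong-binding break (two accepted oblivious signatures sharing $(\sfroot,c,\sigmaDS)$ with $(\widetilde{m},\widetilde{r})\neq(\widetilde{m}',\widetilde{r}')$), or a hash collision extracted via $\Ext_1$/$\Ext_2$, and bounds the advantage by the same three-term sum. The paper packages your ``charging map plus pigeonhole'' as a base game with flags ($\ForgeDS$, $\ReuseDS$, $\CollCOM$) and the counter comparison $q^{\Sign}$ versus the size of $\mathbb{L}^{\Sign}$, but the decomposition, the injectivity case analysis, and the treatment of the $m^*\notin M_{q^{\Sign}}$ winning path are the same.
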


\begin{proof}
Let $\A$ be a PPT adversary for the $\rmSeqsEUFCMA$ game of $\onOS_{\Ours}$.
We introduce the base game $\sfGame^{\sfBasic}_{\onOS_{\Ours}, \A}$ which simulates $\sfGame^{\SeqsEUFCMA}_{\onOS_{\Ours}, \A}$.
We provide $\sfGame^{\sfBasic}_{\onOS_{\Ours}, \A}$ in Fig.~\ref{BasicSimOSour}.

\begin{figure}[htbp]
\centering
\begin{tabular}{|l|}
\hline
$\sfGame^{\sfBasic}_{\onOS_{\Ours}, \A}(1^{\lambda}):$ \\
~~~$\mathbb{L}^{\Sign} \leftarrow \{\}$, $\mathbb{L}^{\ListM} \leftarrow \{\}$, $\mathbb{T} \leftarrow \{\}$, $q^{\Sign} \leftarrow 0$, $q^{\Fin} \leftarrow 0$,  $\Final \leftarrow \ttfalse$,   \\
~~~$\ReuseDS \leftarrow \ttfalse$, $\CollCOM \leftarrow \ttfalse$, $\ForgeDS\leftarrow \ttfalse$, $\ck \leftarrow \COMKeyGen (1^{\lambda})$,  \\
~~~$\ppDS \leftarrow \DSSetup (1^{\lambda})$, $\ppOS \leftarrow (H, \ck, \ppDS)$,  $(\vkDS, \skDS) \leftarrow \DSKeyGen(\ppDS)$, \\
~~~$(\vkOS, \skOS) \leftarrow (\vkDS, \skDS)$, $(m^*, \sigmastrOS) \leftarrow \A^{\mathcal{O}^{\Sign}(\cdot, \cdot), \mathcal{O}^{\Fin}(\cdot, \cdot)}(\ppOS, \vkOS)$\\
~~~If $q^{\Sign} \neq q^{\Fin} \lor \OSVerify(\vkOS, m^*, \sigma^*) \neq 1 \lor (m^*, \sigmastrOS) \in \mathbb{L}^{\Sign}$, return $0$.\\
~~~$\Final \leftarrow \tttrue$, $\mathbb{L}^{\Sign} \leftarrow \mathbb{L}^{\Sign}  \cup  \{(m^*, \sigmastrOS)\}$, $q^{\Fin} \leftarrow q^{\Fin} + 1$\\
~~~Search a pair $(\widetilde{m}, \widetilde{\sigma}^{\OS}) \neq (\widetilde{m}', \widetilde{\sigma}'{}^{\OS})$ in $\mathbb{L}^{\Sign}$ such that the first three   \\  
~~~~~~elements of $\sigmaOS$ are the same.  i.e.,  $(\widetilde{\sfroot}, \widetilde{c}, \widetilde{\sigma}^{\DS}) = (\widetilde{\sfroot}', \widetilde{c}', \widetilde{\sigma}'{}^{\DS})$ \\
~~~If there is no such a pair, $\ForgeDS\leftarrow \tttrue$, return $1$.\\
~~~~~~~~~~~$(\Final = \tttrue \land \ReuseDS = \ttfalse \land \ForgeDS= \tttrue \land  \CollCOM = \ttfalse)$\\
~~~$\ReuseDS \leftarrow \tttrue$.\\
~~~Parse $\widetilde{\sigma}^{\OS}$ as $(\sfroot^*, c^*, \sigmastrDS, \widetilde{\sfpath}, \widetilde{j}, \widetilde{r})$,
$\widetilde{\sigma}'{}^{\OS}$ as $(\sfroot^*, c^*, \sigmastrDS, \widetilde{\sfpath}', \widetilde{j}', \widetilde{r}')$.\\
~~~If $(\widetilde{m}, \widetilde{r})  \neq (\widetilde{m}', \widetilde{r}')$, $\CollCOM \leftarrow \tttrue$, return $1$. \\
~~~~~~~~~~~$(\Final = \tttrue \land \ReuseDS = \tttrue \land  \ForgeDS= \ttfalse  \land \CollCOM = \tttrue)$\\
~~~Otherwise, return $1$. \\
~~~~~~~~~~~$(\Final = \tttrue \land \ReuseDS = \tttrue \land  \ForgeDS= \ttfalse  \land\CollCOM = \ttfalse)$\\
\\
Oracle $\mathcal{O}^{\Sign}(M = (m_{0}, \dots, m_{n-1}), \mu=c):$\\
~~~If $q^{\Sign} \neq q^{\Fin}$, return $\bot$.\\
~~~If there exists a pair $(t \neq t' \in \{0, \dots, n-1\})$ such that $m_{t} = m_{t'}$, return $\bot$.\\
~~~$(\sfroot, \sftree) \leftarrow \MerkleTree^{H}(M)$, $\sigmaDS \leftarrow \DSSign(\skDS, (\sfroot, c))$,\\
~~~$q^{\Sign} \leftarrow q^{\Sign}+1$, $M_{q^{\Sign}} \leftarrow  M$, $\mathbb{L}^{\ListM} \leftarrow \mathbb{L}^{\ListM} \cup \{(q^{\Sign}, M_{q^{\Sign}})\}$, \\
~~~$\mathbb{T} \leftarrow \mathbb{T} \cup \{(q^{\Sign}, M_{q^{\Sign}}, \sfroot, c,\sigmaDS) \}$,\\
~~~return $\rho \leftarrow \sigmaDS$ to $\A$. \\
Oracle $\mathcal{O}^{\Fin}(m^*, \sigmastrOS):$\\
~~~If $q^{\Sign} \neq q^{\Fin} + 1$, return $\bot$.\\
~~~If $\OSVerify(\vkOS, m^*, \sigmastrOS) \neq 1$, return the game output $0$ and abort.\\
~~~If $(m^*, \sigmastrOS) \in \mathbb{L}^{\Sign}$, return the game output $0$ and abort.\\ 
~~~$\mathbb{L}^{\Sign} \leftarrow \mathbb{L}^{\Sign}  \cup  \{(m^*, \sigmastrOS)\}$, $q^{\Fin} \leftarrow q^{\Fin} + 1$,  retrieve $(q^{\Sign}, M_{q^{\Sign}}) \in \mathbb{L}^{\ListM}$.\\
~~~If $m^* \in M_{q^{\Sign}}$, return ``$\mathtt{accept}$"  to $\A$. \\
~~~Parse $\sigmastrOS$ as $(\sfroot^*, c^*, \sigmastrDS, \sfpath^*, j^*,  r^*)$.\\
~~~If $(q^{\Sign}, *,  \sfroot^*, c^*, \sigmastrDS) \in \mathbb{T}$ return the game output $1$.\\
~~~~~~~~~~~$(\Final = \ttfalse \land \ReuseDS = \ttfalse \land \ForgeDS = \ttfalse \land \CollCOM = \ttfalse)$\\
~~~Search a pair $(\widetilde{m}, \widetilde{\sigma}^{\OS}) \neq (\widetilde{m}', \widetilde{\sigma}'{}^{\OS})$ in $\mathbb{L}^{\Sign}$ such that the first  three  \\  
~~~~~~elements of $\sigmaOS$ are the same.  i.e.,  $(\widetilde{\sfroot}, \widetilde{c}, \widetilde{\sigma}^{\DS}) = (\widetilde{\sfroot}', \widetilde{c}', \widetilde{\sigma}'{}^{\DS})$ \\
~~~If there is no such a pair, $\ForgeDS\leftarrow \tttrue$, return the game output $1$.\\
~~~~~~~~~~~$(\Final = \ttfalse \land \ReuseDS = \ttfalse \land \ForgeDS = \tttrue \land \CollCOM = \ttfalse)$\\
~~~$\ReuseDS \leftarrow \tttrue$.\\
~~~Parse $\widetilde{\sigma}^{\OS}$ as $(\sfroot^*, c^*, \sigmastrDS, \widetilde{\sfpath}, \widetilde{j}, \widetilde{r})$,
$\widetilde{\sigma}'{}^{\OS}$ as $(\sfroot^*, c^*, \sigmastrDS, \widetilde{\sfpath}', \widetilde{j}', \widetilde{r}')$. \\
~~~If $(\widetilde{m}, \widetilde{r})  \neq (\widetilde{m}', \widetilde{r}')$, $\CollCOM \leftarrow \tttrue$, return the game output $1$.\\
~~~~~~~~~~~$(\Final = \ttfalse \land \ReuseDS = \tttrue \land \ForgeDS = \ttfalse \land \CollCOM = \tttrue)$ \\
~~~Otherwise, return the game output $1$.\\
~~~~~~~~~~~$(\Final = \ttfalse \land \ReuseDS = \tttrue \land \ForgeDS = \ttfalse \land \CollCOM = \ttfalse)$\\
\hline
\end{tabular}
\caption{\small The base game $\sfGame^{\sfBasic}_{\onOS_{\Ours}, \A}$.}
\label{BasicSimOSour}
\end{figure}

$\sfGame^{\sfBasic}_{\onOS_{\Ours}, \A}$ simulates the game $\sfGame^{\SeqsEUFCMA}_{\onOS_{\Ours}, \A}$ by introducing flags (e.g., $\Final$, $\ReuseDS$) which are used for classifying forgery type and a table $\mathbb{T}$ which stores the computation of the signing oracle $\mathcal{O}^{\Sign}$.
More precisely, the flag $\Final$ represents that a forgery $(m^*, \sigmastrOS=(\sfroot^*, c^*, \sigmastrDS, \sfpath^*, \allowbreak j^*, r^*))$ is submitted in the final output ($\Final = \tttrue$) or $\mathcal{O}^{\Fin}$ ($\Final = \ttfalse$).
The flag $\ReuseDS$ represents that there is a pair $(\widetilde{m}, \widetilde{\sigma}^{\OS}) \neq (\widetilde{m}', \widetilde{\sigma}'{}^{\OS})$ in $\mathbb{L}^{\Sign}$ such that the first three elements of $\sigmaOS$ are the same.  i.e.,  $(\widetilde{\sfroot}, \widetilde{c}, \widetilde{\sigma}^{\DS}) = (\widetilde{\sfroot}', \widetilde{c}', \widetilde{\sigma}'{}^{\DS})$ holds.
We represent that such a pair exists as $\ReuseDS = \tttrue$.
 The table $\mathbb{T}$ stores a tuple $(i, M, \sfroot, c, \sigmaDS)$ where $(M, c)$ is an input for an $i$-th $\mathcal{O}^{\Sign}$ query, $(\sfroot, \sftree) \leftarrow \MerkleTree^{H}(M)$, $\sigmaDS \leftarrow \DSSign(\skDS, (\sfroot, c))$.
The counter $q^{\Sign}$ represents the number of outputs that $\A$ received from the $\mathcal{O}^{\Sign}$ oracle and $q^{\Fin}$ represent the number of submitted signatures from $\A$.

Now, we divide an adversary $\A$ into three types $\A_{1}, \A_{2}, \A_{3}$ according to states of flags $\ReuseDS$, $\ForgeDS$, and $\CollCOM$ when $\A$ wins the game $\sfGame^{\sfBasic}$.
\begin{itemize}
\item$\A_{1}$ wins the game with $\ForgeDS = \tttrue$.
\item$\A_{2}$ wins the game with $\CollCOM=\tttrue$.
\item$\A_{3}$ wins the game with $\ForgeDS = \ttfalse \land \CollCOM=\ttfalse$.
\end{itemize}

For adversaries $\A_{1}$, $\A_{2}$, and $\A_{3}$, we can construct a reduction for the security of $\DS$, $\COM$, and $H$ respectively.
Now, we give reductions for these adversaries.

\paragraph{\bf Reduction $\B^{\DS}$:}
A reduction $\B^{\DS}$ to the $\sEUFCMA$ security game of $\DS$ is obtained by modifying $\sfGame^{\sfBasic}_{\onOS_{\Ours}, \A}$ as follows.
Instead of running $\ppDS \leftarrow \DSSetup (1^{\lambda})$ and $(\vkDS, \skDS) \leftarrow \DSKeyGen(\ppDS)$, $\B^{\DS}$ uses $(\ppDS, \vkDS)$ given by the $\sEUFCMA$ security game of $\DS$.
For a signing query $(M, c)$ from $\A$, $\B^{\DS}$ query $(\sfroot, c)$ to the signing oracle of the $\sEUFCMA$ security game of $\DS$, obtains $\sigmaDS \leftarrow \DSSign(\skDS, \allowbreak (\sfroot, c))$, and returns $\sigmaDS$.
To simplify the discussion, we assume that $\A$ makes distinct $(M, c)$ to $\B^{\DS}$.
(If $\A$ makes the same $(M, c)$ more than once, $\B^{\DS}$ simply outputs return $\sigmaDS \leftarrow \DSSign(\skDS, \allowbreak (\sfroot, c))$ which was previously obtained by the signing oracle of the $\sEUFCMA$ security game where $\sfroot$ is computed from~$M$.) 

If $\B^{\DS}$ outputs $1$ with the condition where $\ForgeDS=\tttrue$, there is the forgery $(\widetilde{\sfroot}, \widetilde{c}, \widetilde{\sigma}^{\DS})$.
Since $\ForgeDS=\tttrue$ holds, $\ReuseDS=\ttfalse$ holds.
This fact implies that for $(m, \sigmaOS) \in \mathbb{L}^{\Sign}$, the first three elements $(\sfroot, c, \sigmaDS)$ of $\sigmaOS$ are all distinct in $\mathbb{L}^{\Sign}$ and valid signatures for $\DS$ (i.e., $\DSVerify(\vkDS,  \allowbreak \allowbreak (\sfroot, c), \sigmaDS)=1$).
Moreover, $\B^{\DS}$ makes $q^{\Sign}$ signing queries to signing oracle, $q^{\Sign} < q^{\Fin}$ holds where $q^{\Fin}$ is the number of entry in $\mathbb{L}^{\Sign}$. Hence, there is a forgery $((\widetilde{\sfroot}, \widetilde{c}), \widetilde{\sigma}^{\DS})$ of $\DS$.
By modifying $\sfGame^{\sfBasic}_{\onOS_{\Ours}, \A}$ to output this forgery $((\widetilde{\sfroot}, \widetilde{c}), \widetilde{\sigma}^{\DS})$, we can obtain~$\B^{\DS}$.

\paragraph{\bf Reduction $\B^{\COM}$:}
A reduction $\B^{\COM}$ to the strong computational binding property of $\COM$ is obtained by modifying $\sfGame^{\sfBasic}_{\onOS_{\Ours}, \A}$ as follows.
$\B^{\COM}$ uses $\ck$ given by the strong computational binding security game of $\COM$.

If $\sfGame_{\OS_{\Ours}, \A}$ outputs $1$ with the condition where $\CollCOM=\tttrue$, there is a collision $(\widetilde{m}, \widetilde{r})  \neq (\widetilde{m}', \widetilde{r}')$ such that $\COMCommit(\ck, \allowbreak \widetilde{m}; \widetilde{r}) = \COMCommit(\ck, \allowbreak \widetilde{m}'; \widetilde{r}')$ holds.
Since if $\CollCOM=\tttrue$ holds, $\ReuseDS=\tttrue$ holds in $\sfGame^{\sfBasic}_{\OS_{\Ours}, \A}$.
This fact implies that there is a pair $(\widetilde{m}, \widetilde{\sigma}^{\OS}=(\sfroot^*, c^*, \sigmastrDS, \allowbreak \widetilde{\sfpath},\allowbreak \widetilde{j}, \widetilde{r})) \neq (\widetilde{m}', \widetilde{\sigma}'{}^{\OS}= (\sfroot^*, c^*, \sigmastrDS, \widetilde{\sfpath}', \allowbreak \widetilde{j}', \widetilde{r}'))$.
Since $(\widetilde{m}, \widetilde{\sigma}^{\OS})$ and $(\widetilde{m}', \widetilde{\sigma}'{}^{\OS})$ are valid signatures, $(\widetilde{m}, \widetilde{r})  \neq (\widetilde{m}', \widetilde{r}')$ and $\COMCommit(\ck, \allowbreak \widetilde{m}; \widetilde{r}) = \COMCommit(\ck, \allowbreak \widetilde{m}'; \widetilde{r}')$ hold.
By modifying $\sfGame^{\sfBasic}_{\onOS_{\Ours}, \A}$ to output this collision $((\widetilde{m}, \widetilde{\sigma}^{\OS}), (\widetilde{m}', \allowbreak \widetilde{\sigma}'{}^{\OS}))$, we can obtain $\B^{\COM}$.

\paragraph{\bf Reduction $\B^{\Hash}$:}
We explain how to obtain a reduction $\B^{\Hash}$ to the collision resistance property from $\sfGame^{\sfBasic}_{\onOS_{\Ours}, \A}$.
If $\sfGame^{\sfBasic}_{\onOS_{\Ours}, \A}$ outputs $1$ with the condition where $\Final = \ttfalse \land \ReuseDS = \ttfalse \land \ForgeDS = \ttfalse$, a collision a hash function can be found.
Since $\Final = \ttfalse \land \ReuseDS = \ttfalse \land \ForgeDS = \ttfalse$ holds, then $(q^{\Sign}, *,  \sfroot^*, c^*, \sigmastrDS) \in \mathbb{T}$ holds.
Let $(M_{q^{\Sign}}, c_{q^{\Sign}})$ be an input for the $q^{\Sign}$-th $\mathcal{O}^{\Sign}$ query.
Then, by the computation of $\mathcal{O}^{\Sign}$ and table $\mathbb{T}$, $c^*=c_{q^{\Sign}}$, $(\sfroot^*, \sftree^*) = \MerkleTree^{H}(M_{q^{\Sign}})$, and $\DSVerify(\vkDS, (\sfroot^*, c^*), \sigmastrDS) = 1$ holds.
Since $m^* \notin M_{q^{\Sign}}$, a collision of a hash function $H$ can be computed by $(x, x') \leftarrow \Ext_{1}(\sftree^*, (m^*, \sfpath^*, i^*))$.
We modify $\sfGame^{\sfBasic}_{\onOS_{\Ours}, \A}$ to output this collision $(x, x')$ in this case.

If $\B^{\sfBasic}_{\onOS_{\Ours}, \A}$ outputs $1$ with the condition where $\ReuseDS = \tttrue \land \ForgeDS = \ttfalse \land \CollCOM = \ttfalse$ (regardless of the bool value $\Final $), a collision of a hash function can be also found.
Since $\ReuseDS = \ttfalse \land \CollCOM = \ttfalse$ holds, then there is a pair $(\widetilde{m}, \widetilde{\sigma}^{\OS}=(\sfroot^*, c^*, \sigmastrDS,  \allowbreak \widetilde{\sfpath},\allowbreak \widetilde{j}, \widetilde{r})) \neq (\widetilde{m}, \widetilde{\sigma}'{}^{\OS}=(\sfroot^*, c^*, \sigmastrDS,  \widetilde{\sfpath}', \allowbreak \widetilde{j}', \widetilde{r}))$ holds.
From this fact, we can see that $(\widetilde{\sfpath},\allowbreak \widetilde{j}) \neq (\widetilde{\sfpath}
', \allowbreak \widetilde{j}')$ holds.
If $j^* \neq \widetilde{j}$ holds, we can obtain a collision of a hash function H as $(x, x') \leftarrow \Ext_{1}(\sftree^*, (m^*, \sfpath^*, i^*)$. 
If $j^* = \widetilde{j}$ holds, then $\widetilde{\sfpath} = \widetilde{\sfpath}'$ holds and thus we can compute a collision of a hash function as $(x, x') \leftarrow \Ext_{2} (m, j^*,  \widetilde{\sfpath}, \widetilde{\sfpath}')$.
We modify $\sfGame^{\sfBasic}_{\onOS_{\Ours}, \A}$ to output this collision $(x, x')$ in these case.

By reduction algorithms $\B^{\DS}$, $\B^{\COM}$, and $\B^{\Hash}$ described above, we can bound the advantage $\Adv^{\SeqsEUFCMA}_{\onOS, \A}(1^{\lambda})$ as

\begin{equation*}
\begin{split}
\Adv^{\SeqsEUFCMA}_{\onOS, \A}({\lambda}) &= \Pr[\sfGame^{\SeqsEUFCMA}_{\onOS_{\Ours}, \A}(1^{\lambda}) \Rightarrow 1] = \Pr[\sfGame^{\sfBasic}_{\onOS_{\Ours}, \A}(1^{\lambda}) \Rightarrow 1]\\
&=  \Pr[\sfGame^{\sfBasic}_{\onOS_{\Ours}, \A}(1^{\lambda}) \Rightarrow 1 \land \ForgeDS = \tttrue]\\
&~~~~~~+  \Pr[\sfGame^{\sfBasic}_{\onOS_{\Ours}, \A}(1^{\lambda}) \Rightarrow 1 \land \CollCOM=\tttrue]\\
&~~~~~~+  \Pr[\sfGame^{\sfBasic}_{\onOS_{\Ours}, \A}(1^{\lambda}) \Rightarrow 1 \land \ForgeDS = \ttfalse \land \CollCOM=\ttfalse]\\
&\leq \Adv^{\sEUFCMA}_{\DS, \A_{1}}({\lambda})  + \Adv^{\sBind}_{\COM, \A_{2}}({\lambda})  +\Adv^{\Coll}_{\mathcal{H}, \A_{3}}({\lambda}).
\end{split}
\end{equation*}
By this fact, we can conclude Theorem \ref{OurconEUF}.
\qed
\end{proof}

\section{Conclusion}\label{SecConclude}
\paragraph{\bf Summary of Our Results.}
In this paper, we revisit the unforgeability security for a $1$-out-of-$n$ oblivious signature scheme and point out problems. By reflecting on these problems, we define the $\rmSeqsEUFCMA$ security.
We propose the improved generic construction of a $1$-out-of-$n$ oblivious signature scheme $\onOS_{\Ours}$.
Compared to the construction by Zhou et al. \cite{ZLH22}, our construction offers a smaller second message size. 
The sum of a second message size and a signature size is improved from $O(n)$ to $O(\log {n})$.

\paragraph{\bf Discussion of Our Security Model.}
We introduce the $\rmSeqsEUFCMA$ security in Definition \ref{Def_Seq-sEUF-CMA_Security}.
It is natural to consider a model that allows concurrent signing interactions. 
However, if we straightforwardly extend our security model to a concurrent setting, there is a trivial attack.

Let us consider the following adversary $\A$ that runs signing protocol executions twice concurrently.
$\A$ chooses two list $M_1=(m_{1,0}, \dots, m_{1, n-1})$ and $M_2=(m_{2,0},  \dots, m_{2, n-1})$ such that $M_1 \cap M_2 = \emptyset$ (i.e., there is no element $m$ such that $m \in M_1 \land m \in M_2$).
In the 1st interaction, $\A$ chooses $m_{1,0} \in M_{1}$, obtains a signature $\sigma_{1}$ on a message $m_{1, 0}$.
In the 2nd interaction $\A$ chooses $m_{2, 0} \in M_{2}$, obtains a signature $\sigma_{2}$ on a message $m_{2, 0}$.
$\A$ finishes the 1st interaction by outputting $(m_{2, 0}, \sigma_{2})$.
Since $m_{2, 0} \notin M_{1}$, $\A$ trivially wins the unforgeability game.
Due to this trivial attack, we cannot straightforwardly extend our security model to the concurrent signing interaction setting.

The drawback of our security model is that it seems somewhat complex and redundant.
It may be possible to define the unforgeability security more simply. 
For example, instead of the signature resubmission check, by introducing the algorithm $\Link$ which verifies a link between a signing interaction and a signature, we may simplify the our security model.
We leave more simple formalization of unforgeability security models as a future work.

\section*{Acknowledgement}
This work was supported by JST CREST Grant Number JPMJCR2113 and  JSPS KAKENHI Grant Number JP23K16841.
We also would like to thank anonymous referees for their constructive comments.

\bibliographystyle{abbrvurl}
\bibliography{ref}

\newpage
\setcounter{tocdepth}{2}
\tableofcontents

\end{document}